\newtheorem{definition}{Definition}
\newtheorem{corollary}{Corollary}
\newtheorem{theorem}{Theorem}
\newtheorem{lemma}{Lemma}
\newtheorem{proposition}{Proposition}
\begin{document}

\title{Fast state tomography with optimal error bounds}

\author{M.\ Guta}
\affiliation{School of Mathematical Sciences, University of Nottingham, United Kingdom}

\author{J.\ Kahn}
\affiliation{Institut de Math\'ematiques de Toulouse, Toulouse, France}

\author{R.\ Kueng}
\email[corresponding author: ]{rkueng@caltech.edu}
\affiliation{California Institute of Technology, Pasadena, United States}

\author{J.\ A.\ Tropp}
\affiliation{California Institute of Technology, Pasadena, United States}

\date{\today}


\begin{abstract}

Projected least squares (PLS) is an intuitive and numerically cheap technique for quantum state tomography.  The method first computes the least-squares estimator (or a linear inversion estimator) and then projects the initial estimate onto the space of states.  The main result of this paper equips this point estimator with a rigorous, non-asymptotic confidence region expressed in terms of the trace distance.  The analysis holds for a variety of measurements, including 2-designs and Pauli measurements.  The \emph{sample complexity} of the estimator is comparable to the strongest convergence guarantees available in the literature and---in the case of measuring the uniform POVM---saturates fundamental lower bounds.
The results are derived by reinterpreting the least-squares estimator as a sum of random matrices and applying a matrix-valued concentration inequality. The theory is supported by numerical simulations for mutually unbiased bases, Pauli observables, and Pauli basis measurements. 

\end{abstract}


\maketitle

\section{Introduction}

Quantum state tomography is the task of reconstructing a quantum state from experimental data.
Many methods have been proposed for this problem.
Maximum-likelihood estimation \cite{hradil_quantum_1997,rehacec_iterative_2001}
is a popular universal approach that produces point estimators,
but error bars are only available in asymptotic scenarios  
involving fully mixed states \footnote{Experimental works \cite{Haffner} do provide bootstrap error bars,
but these methods are not theoretically grounded for rank-deficient states. 
(General statistical theory provides asymptotic error bars and bootstrap theory, but only for fully mixed states where local asymptotic normality holds.)}.
This shortcoming spurred the development of alternatives, such as Bayesian \cite{blume-kohout_optimal_2010,GranadeCombes,GranadeFerrie} and region estimators \cite{christandl_reliable_2012,blume-kohout_robust_2012, faist_practical_2016}.  These methods have other drawbacks,
such as comparatively high computational cost and weak (or implicit) convergence guarantees. 
In parallel, researchers proposed compressed sensing techniques \cite{gross_quantum_2010,liu_universal_2011,flammia_quantum_2012, riofrio_experimental_2017}
to estimate (approximately) low-rank states from fewer samples.

In this work, we revisit linear inversion, one of the oldest and simplest methods for tomography {\bf \cite{DArianoPerinotti}}. 
We prove that a variant, \emph{projected least squares (PLS)},
supports easy-to-interpret, non-asymptotic error guarantees. 
Our main result demonstrates that roughly $r^2d \epsilon^{-2} \log d$ independent samples
suffice to reconstruct \emph{any} rank-$r$ state $\rho$ of level $d$ up to accuracy $\epsilon$ in trace distance. 
This sampling rate is competitive with the most powerful techniques in the literature \cite{flammia_quantum_2012,kueng_low_2017}, and it (almost) saturates fundamental lower bounds
on the minimal number of independent samples\footnote{Correlated measurements over several copies of the state can achieve even lower sampling rates \cite{haah_sample_2017,odonnel_efficient_2016}, but are much more challenging to implement.
Here we focus on the practically more relevant case of measuring each copy of $\rho$ independently.}
required for tomography \cite{haah_sample_2017,flammia_quantum_2012}.
As an added benefit, the PLS method is numerically ``cheap'' in the sense that its computational cost is dominated by forming the least-squares estimator. 
Numerical simulations indicate that PLS is much faster and performs well in comparison to prominent alternatives,
including maximum-likelihood estimation \cite{Haffner} and compressed sensing \cite{gross_quantum_2010}.

\subsection{Background and estimator}

For $d$-level systems, a tomographically complete measurement\footnote{A measurement is tomographically complete if, for every pair of distinct states $\rho \neq \sigma$, there exists $i \in \left[ m \right]$ such that $\mathrm{tr}(M_i \rho) \neq \mathrm{tr}(M_i \sigma)$.} is described by $d \times d$ hermitian matrices $M_1,\ldots,M_m \in \mathbb{H}_d$ that are positive semidefinite and obey $\sum_{i=1}^m M_i = \mathbb{I}$. Measuring a quantum state $\rho$ results in one of $m$ outcomes, indexed by $i \in \left[m \right]$. The probability of observing outcome $i$ depends on $\rho$ and is described by Born's rule:
\begin{equation}
\left[p \right]_i = \mathrm{Pr} \left[ i| \rho \right] = \mathrm{tr} \left( M_i \rho \right) \textrm{ for } i \in \left[m \right]. \label{eq:born_rule}
\end{equation}
These probabilities can be estimated by \emph{frequencies}:
Prepare $n$ copies of the state, measure each of them separately, and set
\begin{equation}
\left[ f_n \right]_i= \frac{n_i}{n} \textrm{ for } i \in \left[m \right], \label{eq:frequencies}
\end{equation}
where $n_i$ is the number of times outcome $i$ was observed.
These frequencies converge to the true probabilities as $n \to \infty$.
The \emph{least-squares estimator}
is the solution to the least-squares problem that results from replacing the true probabilities
in Born's rule by frequencies \eqref{eq:frequencies}:
\begin{equation}
\hat{L}_n = \underset{X \in \mathbb{H}_d}{\textrm{argmin}} \quad \sum_{i=1}^m \left( \left[f_n \right]_i - \mathrm{tr} \left( M_i X \right) \right)^2 . \label{eq:linear_inversion}
\end{equation}
This optimization inverts the $m$ linear equations specified by Born's rule; see Eq.~\eqref{eq:LI_solution} below (\emph{linear inversion}). In general, $\hat{L}_n$ can have negative eigenvalues, so it may \emph{fail} to be a quantum state.  Several ways to overcome this drawback have been proposed; e.g.\ \cite{kaznady_numerical_2009,smolin_efficient_2012,AlquierButucea_2013,butucea_spectral_2015,Koltchinskii_optimal_2015}.
In this work, we propose to compute the quantum state closest to $\hat{L}_n$ with respect to Frobenius norm:
\begin{equation}
\hat{\rho}_n = \underset{\sigma\textrm{ is a quantum state}}{\textrm{argmin}} \quad \| \hat{L}_n-\sigma \|_2.
\label{eq:sdp}
\end{equation}
Mathematically, this corresponds to \emph{projecting} $\hat{L}_n$ onto the convex set of all quantum states. 
We term this procedure \emph{Projected Least Squares (PLS)}; see Tab.~\ref{tab:box}.
To our knowledge, PLS is new.  Indeed, existing techniques are (apparently) more sophisticated.
Nevertheless, the simplicity of our approach is a key advantage for both the analysis and for the
actual computation.
The least-squares estimator \eqref{eq:linear_inversion}
and the projection onto the set of quantum states \eqref{eq:analytic_solution} admit closed-form expressions.
The total cost of the PLS estimator is dominated by forming the least-squares estimate $\hat{L}_n$.
PLS requires considerably
less storage and arithmetic than existing techniques that are based on more complicated
optimization problems.

\begin{table}
\begin{tcolorbox}[
left=0mm,right=0mm,top=0mm,bottom=0mm,boxsep=1mm,arc=0mm,boxrule=0.5pt,
title=Projected least squares (PLS) estimator $\hat{\rho}_n$]
\begin{itemize}
\item Estimate probabilities by frequencies \eqref{eq:frequencies},
\item Compute the least squares estimator \eqref{eq:linear_inversion},
\item Project it onto the set of quantum states \eqref{eq:sdp}.
\end{itemize}
\end{tcolorbox}
\caption{Summary of the estimation technique.} \label{tab:box}
\end{table}

\section{Results}

\subsection{Error bounds and confidence regions for $\hat{\rho}_n$} 

We analyze several important and practically relevant measurement systems: \emph{structured POVMs} (e.g. SIC-POVMs, MUBs and stabilizer states), \emph{(global) Pauli observables}, \emph{Pauli basis measurements}, and the \emph{uniform/covariant POVM}. 
For each of these settings, the PLS estimator $\hat{\rho}_n$ provably converges to the true state $\rho$
in trace distance $\| \cdot \|_1$.  

\begin{theorem}[Error bound for $\hat{\rho}_n$] \label{thm:main_result}
Let $\rho \in \mathbb{H}_d$ be state and fix a number of samples $n \in \mathbb{N}$.
Then, for \emph{each} of the aforementioned measurements, the PLS estimator (Tab.~\ref{tab:box}) obeys
\begin{equation*}
\mathrm{Pr} \left[ \left\| \hat{\rho}_n- \rho \right\|_1 \geq \epsilon \right] \leq d \mathrm{e}^{- \frac{n \epsilon^2}{43 g(d) r^2}} \textrm{ for } \epsilon \in \left[0,1 \right],
\end{equation*}
where $r = \min \left\{ \mathrm{rank}(\rho), \mathrm{rank}(\hat{\rho}_n) \right\}$ and $g(d)$ specifies dependence on the ambient dimension:
\begin{align*}
g(d)  &= 2 d \quad \textrm{for structured POVMs; see Eq.~\eqref{eq:LI_2design} for $\hat{L}_n$}, \\
g(d) &=  d^2 \quad \textrm{for Pauli observables; see Eq.~\eqref{eq:LI_pauli} for $\hat{L}_n$}, \\
g(d) &\simeq  d^{1.6} \quad \textrm{for Pauli basis measurements; see Eq.~\eqref{eq:LI_pauli_basis} for $\hat{L}_n$}.
\end{align*}
\end{theorem}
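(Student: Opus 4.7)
The plan is to split the argument into a deterministic \emph{projection lemma} that converts an operator-norm bound on $\hat{L}_n - \rho$ into a trace-norm bound on $\hat{\rho}_n - \rho$, together with a \emph{matrix concentration step} that supplies the operator-norm bound.  If the projection lemma is $\|\hat{\rho}_n - \rho\|_1 \leq 4r\,\|\hat{L}_n - \rho\|_\infty$ and matrix Bernstein delivers $\mathrm{Pr}[\|\hat{L}_n - \rho\|_\infty \geq t] \leq d\,\mathrm{e}^{-cnt^2/g(d)}$, then setting $t = \epsilon/(4r)$ instantly yields the claimed bound with the $r^2$ appearing from squaring the $r$ of the projection lemma inside the exponent.

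For the projection lemma I would exploit the first-order optimality of $\hat{\rho}_n$ as the Frobenius projection of $\hat{L}_n$.  Writing $\Delta = \hat{\rho}_n - \rho$, the variational inequality at the minimizer applied to the feasible direction $\rho - \hat{\rho}_n$ rearranges to
\begin{equation*}
\|\Delta\|_2^{\,2} \;\leq\; \mathrm{tr}\!\bigl((\hat{L}_n - \rho)\Delta\bigr) \;\leq\; \|\hat{L}_n - \rho\|_\infty\,\|\Delta\|_1,
\end{equation*}
where the last step is matrix H\"older.  The structural input that produces the \emph{min-rank} factor is a control on the ranks of the spectral parts of $\Delta$: if $v$ is a positive eigenvector of $\Delta$, then $\langle v,\hat{\rho}_n v\rangle > \langle v,\rho v\rangle \geq 0$, so the positive eigenspace injects into $\mathrm{supp}(\hat{\rho}_n)$ and $\mathrm{rank}(\Delta_+)\leq \mathrm{rank}(\hat{\rho}_n)$; symmetrically $\mathrm{rank}(\Delta_-) \leq \mathrm{rank}(\rho)$.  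Since $\mathrm{tr}(\Delta)=0$ forces $\mathrm{tr}(\Delta_\pm) = \tfrac12\|\Delta\|_1$, applying Cauchy--Schwarz to whichever of the two halves has the smaller rank yields $\|\Delta\|_1 \leq 2\sqrt{r}\,\|\Delta\|_2$, and substituting back gives the lemma.

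For the concentration step I would, for each listed measurement, cast the linear inversion formula \eqref{eq:linear_inversion} as $\hat{L}_n = \tfrac{1}{n}\sum_{k=1}^{n} X_k$ with independent, identically distributed summands that satisfy $\mathbb{E}[X_k] = \rho$ and admit a uniform operator-norm bound, and then invoke matrix Bernstein.  This reduces the analysis to computing the per-scheme variance proxy $\|\mathbb{E}(X_k-\rho)^2\|_\infty$: 2-design identities collapse the second moment onto $\mathbb{I}/d$ and give $g(d)\simeq d$; for global Pauli observables the $\pm d$ rescaling of eigenprojectors inflates the variance to $g(d)\simeq d^2$; and a more delicate tensor-factor accounting produces the intermediate scaling $g(d)\simeq d^{1.6}$ for Pauli basis measurements.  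I expect this moment computation --- especially the non-integer exponent in the Pauli-basis case --- to be the main technical obstacle; granting it, substituting $t=\epsilon/(4r)$ into matrix Bernstein yields $d\exp\!\bigl(-c\,n\epsilon^2/(16\,r^2 g(d))\bigr)$, and absorbing the universal constants into the $43$ recovers the stated inequality.
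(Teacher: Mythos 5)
Your proposal is sound and arrives at exactly the paper's quantitative skeleton (operator-norm concentration at level $\tau=\epsilon/(4r)$, converted into $\|\hat{\rho}_n-\rho\|_1\le 4r\tau$), but the conversion step is argued by a genuinely different route. The paper works from the closed-form eigenvalue-thresholding solution of the projection: it first shows the threshold $x_0$ lies in $[0,\tau]$ whenever $\|\hat{L}_n-\rho\|_\infty\le\tau$, deduces $\|\hat{\rho}_n-\rho\|_\infty\le 2\tau$, and then invokes a rank lemma ($\|\rho-\sigma\|_1\le 2r\|\rho-\sigma\|_\infty+2\min\{\sigma_r(\rho),\sigma_r(\sigma)\}$) to reach $4r\tau$ plus approximation-error terms. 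You instead use the first-order optimality (variational inequality) of the Frobenius projection to get $\|\Delta\|_2^2\le\mathrm{tr}\bigl((\hat{L}_n-\rho)\Delta\bigr)\le\|\hat{L}_n-\rho\|_\infty\|\Delta\|_1$, and control $\|\Delta\|_1\le 2\sqrt{r}\,\|\Delta\|_2$ via $\mathrm{tr}(\Delta)=0$ and the observation that the positive (resp.\ negative) eigenspace of $\Delta$ injects into $\mathrm{supp}(\hat{\rho}_n)$ (resp.\ $\mathrm{supp}(\rho)$); this chain is correct, yields the same constant $4r$, and has the advantage of not relying on the explicit thresholding formula (it works for projection onto any convex subset of states), though it does not by itself produce the paper's more general effective-rank statement with the $\sigma_r$ residuals. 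On the concentration side you follow the paper's approach (matrix Bernstein applied to the linear-inversion estimator written as a sum of independent random matrices), with two small caveats: for Pauli observables and Pauli-basis measurements the summands are independent but \emph{not} identically distributed (the sample budget is split over $d^2$, respectively $3^k$, settings) --- harmless, since Bernstein needs only independence, as the paper notes --- and the scheme-specific variance proxies ($\simeq 2d$ for 2-designs, $d^2$ for Pauli observables, $3^k=d^{\log_2 3}\approx d^{1.585}$ for Pauli bases, which is where the ``$d^{1.6}$'' comes from) are asserted rather than computed; these calculations, done in the paper via the 2-design identity, the Pauli operator basis, and a tensor-factor/depolarizing-channel expansion summing $\sum_j\binom{k}{j}2^j=3^k$, are the remaining work your sketch would need to be complete.
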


The following immediate Corollary endows $\hat{\rho}_n$ with rigorous error-bars in trace distance.

\begin{corollary}[$\delta$-confidence region] \label{cor:error_bars}
The trace-norm ball of size $ \mathrm{rank}(\hat{\rho}_n) \sqrt{43 g(d) n^{-1}\log\left(d/\delta\right)}$ around $\hat{\rho}_n$ (intersected with state space) is a $\delta$-confidence region for the true state $\rho$.
\end{corollary}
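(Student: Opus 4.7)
The plan is to directly invert the tail bound of Theorem~\ref{thm:main_result} and recast the resulting confidence radius in terms of quantities that are observable from the data. Setting the right-hand side of Theorem~\ref{thm:main_result} equal to the target failure probability $\delta$,
\begin{equation*}
d\exp\!\left(-\frac{n\epsilon^2}{43\,g(d)\,r^2}\right) = \delta,
\end{equation*}
and solving for $\epsilon$ yields the critical radius $\epsilon(\delta) = r\sqrt{43\,g(d)\,n^{-1}\log(d/\delta)}$. Theorem~\ref{thm:main_result} then asserts that, with probability at least $1-\delta$, the true state $\rho$ lies inside the trace-norm ball of radius $\epsilon(\delta)$ around $\hat{\rho}_n$.

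The radius $\epsilon(\delta)$ still depends on $r=\min\{\mathrm{rank}(\rho),\mathrm{rank}(\hat{\rho}_n)\}$, which is not computable from the data alone because $\mathrm{rank}(\rho)$ is unknown to the experimenter. The next step is therefore to replace $r$ by the observable upper bound $\mathrm{rank}(\hat{\rho}_n)$. Since $r\leq\mathrm{rank}(\hat{\rho}_n)$ pathwise, the coverage-failure event at the larger (observable) radius is contained in the coverage-failure event at the smaller $r$-dependent radius, so the probability bound carries over without loss. Finally, intersecting the resulting ball with the convex set of quantum states can only shrink the region, and never removes the true state $\rho$ (which is itself a state), so the coverage probability stays at least $1-\delta$.

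There is no substantive obstacle here, which is precisely why the statement is labeled a corollary rather than a theorem. The one subtlety worth flagging is the monotonicity step: because both the threshold $\epsilon(\delta)$ and $r$ are random through the estimator $\hat{\rho}_n$, one needs the pathwise inequality $r\leq\mathrm{rank}(\hat{\rho}_n)$ to turn Theorem~\ref{thm:main_result}'s partially data-dependent bound into a self-contained confidence region expressible purely in terms of the computed estimator.
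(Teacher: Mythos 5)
Your proposal is correct and takes essentially the same route as the paper: the corollary is stated there as ``immediate'' and rests on exactly this inversion of the tail bound in Theorem~\ref{thm:main_result}, with the pathwise replacement $r \leq \mathrm{rank}(\hat{\rho}_n)$ handled as you describe. The data-dependence subtlety you flag is resolved by the appendix proof, where the good event is the fixed operator-norm concentration event for $\hat{L}_n$, on which the trace-norm bound with the random rank holds pathwise --- precisely what legitimizes quoting the radius in terms of the observable $\mathrm{rank}(\hat{\rho}_n)$.
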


We emphasize the following aspects of this result:

\begin{itemize}
\item[(i)] \emph{(Almost) optimal sampling rate:}
Theorem~\ref{thm:main_result} highlights that 
\begin{equation}
n \geq 43 g(d) \frac{\mathrm{rank}(\rho)^2}{\epsilon^2} \log \left( \frac{d}{\delta} \right)
\label{eq:sampling_rate}
\end{equation}
samples suffice to ensure $\| \hat{\rho}_n - \rho \|_1 \leq \epsilon$ with probability at least $1-\delta$. 
For structured POVMs and Pauli observables, this \emph{sampling rate} is comparable to the
\emph{best} theoretical bounds for alternative tomography algorithms
\cite{kueng_clifford_2016,gross_quantum_2010}. 
Moreover, fundamental lower bounds in \cite{haah_sample_2017} and \cite{flammia_quantum_2012} indicate that this scaling is optimal up to a single $\log (d)$-factor, so it cannot be improved substantially.

\item[(ii)] \emph{implicit exploitation of (approximate) low rank:} the number of samples required to achieve a good estimator scales quadratically in the rank, rather than the ambient dimension $d$.  This behavior extends to the case where $\rho$, or $\hat{\rho}_n$, is well-approximated by a rank-$r$ matrix; see Theorem~\ref{thm:main_appendix} in the appendix.  These results are comparable with guarantees for compressed sensing methods \cite{flammia_quantum_2012} that are specifically designed to exploit low-rank. Fig~\ref{fig:cs} (below) provides numerical confirmation.
\end{itemize}

\begin{proof}[Proof sketch for Theorem~\ref{thm:main_result}]
The least-squares estimator $\hat{L}_n$ can be viewed as a sum of $n$ independent random matrices.
To illustrate this, consider a single structured POVM measurement. Then $\hat{L}_1$ defined in \eqref{eq:LI_2design} is an instance of the random matrix $X=|v_k \rangle \! \langle v_k|-\mathbb{I}$, where $k \in \left[m \right]$ occurs with probability $\langle v_k| \rho |v_k \rangle$ (Born's rule). 
This generalizes to $\hat{L}_n = \frac{1}{n} \sum_{i=1}^n X_i$, where the matrices $X_i$ are statistically independent. Such sums of random matrices concentrate sharply around their expectation value $\mathbb{E} \hat{L}_n = \rho$, and 
matrix concentration inequalities \cite{tropp_user-friendly_2012} quantify this convergence: 
\begin{equation}
\mathrm{Pr} \left[ \left\| \hat{L}_n - \rho \right\|_\infty \geq \tau \right] \leq d \mathrm{e}^{ - \frac{3n \tau^2}{8g(d)} }\quad \tau \in \left[0,1 \right].
\label{eq:essential}
\end{equation}
The operator norm bound induces a Frobenius norm bound, and the projection onto quantum states contracts the Frobenius norm. 
The claim then follows from relating the (projected) Frobenius norm distance to the trace distance. None of these comparisons depend on the ambient dimension, but only on the (approximate) rank. We refer to the appendix for details.
\end{proof}

\subsection{Optimal performance guarantee for the uniform POVM}

Theorem~\ref{thm:main_result} involves a factor of the dimension $d$ that may be extraneous.
In turn, this factor introduces an additional $\log (d)$-gap between Eq.~\eqref{eq:sampling_rate}
and existing lower bounds \cite{haah_sample_2017}.
The dimensional factors emerge because we employ matrix-valued concentration inequalities in the proof.
Our second main result shows that we can remove the dimensional factor for the \emph{uniform POVM},
which encompasses all rank-one projectors:

\begin{theorem}[Convergence of $\hat{\rho}_n$ for the uniform POVM] \label{thm:uniform_main}
For uniform POVM measurements, the PLS estimator obeys
\begin{equation*}
\mathrm{Pr} \left[ \left\| \hat{\rho}_n-\rho \right\|_1 \geq \epsilon \right] \leq \mathrm{e}^{ 2.2 d - \frac{\epsilon^2 n}{480 r^2} } \textrm{ for }  \epsilon >0,
\end{equation*}
where $r = \min \left\{ \mathrm{rank}(\rho),\mathrm{rank}(\hat{\rho}_n)\right\}$.
\end{theorem}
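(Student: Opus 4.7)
The plan is to follow the Theorem~\ref{thm:main_result} strategy but replace the matrix Bernstein inequality---which costs a polynomial-in-$d$ prefactor---by a covering plus scalar Bernstein argument that costs only a volumetric $\exp(\Theta(d))$ factor in the exponent. Two ingredients are needed.

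First, a deterministic PLS contraction. I would establish
\[
\|\hat\rho_n-\rho\|_1 \;\leq\; 2r\,\|\hat L_n-\rho\|_\infty,
\]
from the variational characterization of the Frobenius projection onto the convex set of states: $\langle \hat L_n-\hat\rho_n,\rho-\hat\rho_n\rangle \leq 0$ gives $\|\hat\rho_n-\rho\|_2^2 \leq \langle \hat\rho_n-\rho,\hat L_n-\rho\rangle \leq \|\hat\rho_n-\rho\|_1\,\|\hat L_n-\rho\|_\infty$ by von Neumann's trace inequality, and since $\hat\rho_n-\rho$ has rank at most $2r$, $\|\hat\rho_n-\rho\|_1 \leq \sqrt{2r}\,\|\hat\rho_n-\rho\|_2$, which combine to the displayed inequality. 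This lets us convert trace-distance tail bounds into operator-norm tail bounds with only an $r$-factor penalty, in contrast to the $\sqrt{r}$-to-Frobenius route used for Theorem~\ref{thm:main_result}.

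Second, a sharp tail bound for $\|\hat L_n-\rho\|_\infty$. Write $\hat L_n-\rho=\frac{1}{n}\sum_i Y_i$ with $Y_i=(d+1)|v_i\rangle\langle v_i|-\mathbb I-\rho$ and $v_i$ drawn from $d\langle v|\rho|v\rangle\,d\mu_{\mathrm{Haar}}(v)$. For any fixed unit vector $u$, the scalar $u^\ast Y_i u$ is centered, bounded in magnitude by $d+2$, and has variance of order $1$: evaluating $\mathbb E|\langle u,v_i\rangle|^4$ with the Haar $3$-design identity for $\int (|v\rangle\langle v|)^{\otimes 3}\,d\mu_{\mathrm{Haar}}$ shows this fourth moment is $O(d^{-2})$, which exactly cancels the $(d+1)^2$ prefactor in $Y_i^2$. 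Scalar Bernstein then gives $\mathrm{Pr}\bigl[|u^\ast(\hat L_n-\rho)u|\geq \tau\bigr]\leq 2e^{-cn\tau^2}$ in the variance-dominated regime. Covering the set of rank-one test projectors $\{|u\rangle\langle u|\}$ (equivalently $\mathbb{CP}^{d-1}$) by an appropriate net $\mathcal N$ with $\log|\mathcal N|$ of order $d$ (the constant $\log 9 \approx 2.2$ coming from a fine enough Szarek-type cover at resolution $1/4$) and a union bound yield $\mathrm{Pr}[\|\hat L_n-\rho\|_\infty\geq \tau] \leq \exp(2.2d - c'n\tau^2)$. Substituting $\tau=\epsilon/(2r)$ into the PLS contraction and tracking constants produces the claimed bound.

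The main obstacle is the variance estimate. A range-only bound would give $\mathrm{Var}(u^\ast Y_i u) = O(d^2)$, which yields no improvement over Theorem~\ref{thm:main_result}. Exploiting the fact that the Haar measure is an exact projective $t$-design for every $t$ is essential for reducing the fourth moment to $O(d^{-2})$ and hence the variance to $O(1)$; this is the structural feature of the uniform POVM that the other measurement schemes in Theorem~\ref{thm:main_result} lack. A secondary difficulty is matching the precise constants $2.2$ and $480$: the former forces a careful choice of net resolution on $\mathbb{CP}^{d-1}$ rather than a generic volumetric estimate, and the latter requires optimizing the Bernstein constants and the factor-of-$2$ losses incurred by the net approximation and the PLS contraction together.
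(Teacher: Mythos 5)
Your overall architecture (variational form of the operator norm, $1/4$-net on the complex sphere with $\log|\mathcal N|=\Theta(d)$, scalar Bernstein per net point, union bound, then a rank-dependent conversion to trace norm) is exactly the paper's strategy, but the concentration step has a genuine gap. You certify only two facts about $u^\ast Y_i u$: variance $O(1)$ (via the degree-3 design identity) and the almost-sure bound $|u^\ast Y_i u|\leq d+2$. Bernstein with these inputs gives the subgaussian tail $2e^{-cn\tau^2}$ only in the regime $\tau\lesssim \sigma^2/R\sim 1/d$; for larger deviations the exponent degrades to $-c\,n\tau/d$. The theorem, however, requires $\tau=\Theta(\epsilon/r)$, which is typically far above $1/d$ (e.g.\ $r=1$, $\epsilon$ of constant order), so after the union bound over the $e^{\Theta(d)}$-point net your route only yields a sample complexity of order $d^2 r/\epsilon$, not the claimed $d\,r^2/\epsilon^2$, and cannot produce $e^{2.2d-\epsilon^2 n/(480 r^2)}$. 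The paper closes precisely this hole: using the order-$(p+1)$ frame operator $F_{(p+1)}\propto P_{\mathrm{Sym}^{(p+1)}}$ it proves Lemma~\ref{lem:moments}, $\mathbb{E}|s_z|^p\leq 27\cdot 6^{p-2}p!$ for \emph{all} $p\geq 2$, i.e.\ a sub-exponential scale $R=6$ independent of $d$, and then applies the moment-form Bernstein inequality (Theorem~\ref{thm:scalar_bernstein}), so the Gaussian-type tail persists for all $\tau$ up to constant order. You note that Haar being an exact design of every order is ``essential,'' but you only exploit it for the fourth moment; without the all-$p$ moment bound the argument does not go through.

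A secondary flaw is the contraction step: ``$\hat\rho_n-\rho$ has rank at most $2r$'' is false when $r=\min\{\mathrm{rank}(\rho),\mathrm{rank}(\hat\rho_n)\}$, since the rank of a difference is bounded by the \emph{sum} of the ranks (hence by $2\max$, not $2\min$); take $\rho$ pure and $\hat\rho_n$ full rank. Your projection inequality $\|\hat\rho_n-\rho\|_2^2\leq\langle\hat\rho_n-\rho,\hat L_n-\rho\rangle\leq\|\hat\rho_n-\rho\|_1\,\|\hat L_n-\rho\|_\infty$ is sound, but to pass from $\|\cdot\|_2$ to $\|\cdot\|_1$ with the \emph{minimum} rank you must use positivity and tracelessness: $A=\hat\rho_n-\rho$ has at most $\mathrm{rank}(\hat\rho_n)$ positive and at most $\mathrm{rank}(\rho)$ negative eigenvalues, and $\mathrm{tr}(A)=0$ gives $\|A\|_1=2\|A_+\|_1\leq 2\sqrt r\,\|A\|_2$; combined with your display this yields $\|\hat\rho_n-\rho\|_1\leq 4r\,\|\hat L_n-\rho\|_\infty$ -- the same factor as the paper's Proposition~\ref{prop:trace_distance}, which instead goes through the threshold bound $x_0\in[0,\tau]$ (Lemma~\ref{lem:conversion_aux1}) and Lemma~\ref{lem:conversion_aux2} -- but not your claimed factor $2r$. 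With these repairs your route is viable (and the Frobenius-projection contraction is a nice alternative to the paper's eigenvalue-threshold argument), though the advertised constants $2.2$ and $480$ would not come out exactly as stated; to be fair, the paper's own bookkeeping between Theorem~\ref{thm:uniform} and Theorem~\ref{thm:uniform_main} is loose on the same conversion factor.
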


This result exactly reproduces the best existing performance guarantees
for tomography from independent measurements \cite{kueng_low_2017}. 
The bound follows from standard techniques from high-dimensional probability theory.

\begin{proof}[Proof sketch of Theorem~\ref{thm:uniform_main}]
The operator norm  has a variational formulation: $\| \hat{L}_n - \rho \|_\infty = \max_{z \in \mathbb{S}^d} |\langle z| \hat{L}_n - \rho |z \rangle$. The optimization over the unit sphere may be replaced by a maximization over a finite point set, called a \emph{covering net}, whose cardinality scales exponentially in $d$. 
For any $z \in \mathbb{S}^d$, $\langle z| \hat{L}_n-\rho|z \rangle$ is a sum of $n$ i.i.d. random variables that exhibit subexponential tail decay. (Measuring the uniform POVM allows us to draw this conclusion.) Standard concentration inequalities yield a tail bound that decays exponentially in the number $n$ of samples.
Applying a union bound over all points $z_i$ in the net then ensures
$
\mathrm{Pr} \left[ \| \hat{L}_n - \rho \|_\infty \geq \tau \right]
\leq 2 \mathrm{e}^{ c_1 d - c_2 n \tau^2}.
$
Subsequently, closeness in operator norm for $\hat{L}_n$ may be converted into closeness in trace-norm for $\hat{\rho}_n$ at the cost of an additional (effective) rank factor.
\end{proof}

\section{Algorithmic considerations}

\subsection{Explicit solutions for the least squares estimator \eqref{eq:linear_inversion}}

Tomographically complete measurements can be viewed as injective linear maps $\mathcal{M}: \mathbb{H}_d \to \mathbb{R}^m$ with components $\left[ \mathcal{M}(X) \right]_i = \mathrm{tr}(M_i X)$
 specified by Born's rule \eqref{eq:born_rule}. It is well known that the  least-squares problem \eqref{eq:linear_inversion} admits the closed-form solution:
\begin{equation}
\hat{L}_n = \left( \mathcal{M}^\dagger \mathcal{M} \right)^{-1} \left( \mathcal{M}^\dagger \!(f_n) \right). \label{eq:LI_solution}
\end{equation}
We evaluate this formula for different measurements
and content ourselves with sketching key steps and results (see appendix for details).

\paragraph{\textbf{Structured POVMs and the uniform POVM:}}
Also known as 2-designs,
these systems include highly structured, rank-one POVMs $\left\{ \frac{d}{m} |v_i \rangle \! \langle v_i| \right\}_{i=1}^m$, such as
symmetric informationally complete  POVMs \cite{renes_symmetric_2004}, maximal sets of mutually unbiased bases \cite{klappenecker_mutually_2005}, the set of all stabilizer states \cite{dankert_exact_2009,gross_evenly_2007}, as well as the uniform POVM.
By definition, for $X \in \mathbb{H}_d$, all of the above systems obey
\begin{align*}
\mathcal{M}^\dagger \!\mathcal{M}(X)
= \frac{d^2}{m} \sum_{i=1}^m \langle v_i| X |v_i \rangle \!|v_i \rangle \! \langle v_i | = \frac{md}{d+1}\left( X+ \mathrm{tr}(X) \mathbb{I} \right).
\end{align*}
These equations can readily be inverted, and Eq.~\eqref{eq:LI_solution} simplifies to
\begin{equation}
\hat{L}_n = (d+1) \sum_{i=1}^m  \left[ f_n \right]_i  |v_i \rangle \! \langle v_i| - \mathbb{I}. \label{eq:LI_2design}
\end{equation}

\paragraph{\textbf{Pauli observables:}}

Fix $d=2^k$ ($k$ qubits), and let $W_1,\ldots,W_{d^2} \in \mathbb{H}_d$ be the set of Pauli observables, comprising all possible $k$-fold tensor products of the elementary $2 \times 2$ Pauli matrices.
We can approximate the expectation value $\mathrm{tr}(W_i \rho)$ of each Pauli observable by the empirical mean 
$\hat{\mu}_i = \left[ f^+_{n/d^2}\right]_i - \left[f^-_{n/d^2} \right]$
of the 2-outcome POVM $P_i^\pm =  \frac{1}{2} \left( \mathbb{I} \pm W_i \right) $.
Pauli matrices form a unitary operator basis, and the evaluation of Eq.~\eqref{eq:LI_solution} is simple:
\begin{equation}
\hat{L}_n = \frac{1}{d} \sum_{i=1}^{d^2} \hat{\mu}_i W_i
= \frac{1}{d} \sum_{i=1}^{d^2} \left( \left[ f_{n/d^2}^+ \right]_i - \left[ f_{n/d^2}^- \right]_i \right) W_i. \label{eq:LI_pauli}
\end{equation}

\paragraph{\textbf{Pauli basis measurements}}

Rather than approximating (global) expectation values, it is possible to perform different combinations of local Pauli measurements.
For $d=2^k$, there are $3^k$ potential combinations in total.
Each of the settings $\vec{s} \in \left\{x,y,z \right\}^k$ corresponds to a basis measurement  $ | b^{(\vec{s})} _{\vec{o}} \rangle \! \langle b^{(\vec{s})}_{\vec{o}}|$, where $\vec{o} \in\left\{ \pm 1 \right\}^k$ labels the $2^k$ potential outcomes. 
The union $\mathcal{M}$ of all $3^k$ bases obeys $\left( \mathcal{M}^\dagger \mathcal{M} \right)(X) = 3^k \mathcal{D}^{\otimes k}_{1/3}(X)$, 
where $\mathcal{D}_{1/3}(X) =  \frac{1}{3} \rho + \frac{\mathrm{tr}(X)}{3} \mathbb{I}$ denotes a single-qubit depolarizing channel. Evaluating Eq.~\eqref{eq:LI_solution} yields
\begin{equation}
\hat{L}_n
= \frac{1}{3^k} \sum_{\vec{s},\vec{o}} \left[ f_{n/3^k} \right]_{\vec{o}}^{(\vec{s})} \left( \mathcal{D}_{1/3}^{\otimes k} \right)^{-1}\left( | b^{(\vec{s})} _{\vec{o}} \rangle \! \langle b^{(\vec{s})} _{\vec{o}}|  \right). \label{eq:LI_pauli_basis}
\end{equation}

Finally, we point out that all of these explicit solutions are guaranteed to have unit trace: $\mathrm{tr} \left( \hat{L}_n \right) = 1$.  They are generally not positive semidefinite,

\subsection{Explicit solutions for the projection step \eqref{eq:sdp}}

The PLS estimator is defined to be the state closest in Frobenius norm to the least-squares estimator $\hat{L}_n$. 
The search \eqref{eq:sdp} admits a simple, \emph{analytic} solution \cite{smolin_efficient_2012}. 
 Define the all-ones vector $\vec{1} \in \mathbb{R}^d$ and the thresholding function $\left[ \cdot \right]^+$ with components $\left[ \vec{y} \right]^+_i = \max \left\{ \left[\vec{y}\right]_i, 0 \right\}$. Let $\hat{L}_n = U \mathrm{diag}(\vec{\lambda}) U^\dagger$ be an eigenvalue decomposition.
Then
\begin{equation}
\hat{\rho}_n = U \mathrm{diag} \left(\left[ \vec{\lambda} - x_0 \vec{1} \right]^+\right) U^\dagger, \label{eq:analytic_solution}
\end{equation}
where $x_0 \in \mathbb{R}$ is chosen so that $\mathrm{tr}(\hat{\rho}_n) = 1$. The fact that $\hat{L}_n$ itself has unit trace ensures that this solution to Eq.~\eqref{eq:sdp} is unique. The number $x_0$ may be determined by applying a root-finding algorithm to the non-increasing function $f(x) = 2+\mathrm{tr}(\hat{L}_n) - dx + \sum_{i=1}^d \left| \lambda_i - x \right|$. 

\subsection{Runtime analysis}

The two steps discussed here are inherently \emph{scalable}: just count frequencies to determine the LI estimators (\ref{eq:LI_2design},\ref{eq:LI_pauli},\ref{eq:LI_pauli_basis}) at a total cost of (at most) $\ \min \left\{m,n \right\}$ matrix additions. The subsequent projection onto state-space is a particular type of soft-thresholding. The associated computational cost is dominated by the eigenvalue decomposition and has runtime (at most) $\mathcal{O} \left(d^3 \right)$.

In summary, forming $\hat{L}_n$ is the dominant cost of a na{\"i}ve implementation.
However, the high degree of structure may allow us to employ techniques from randomized
linear algebra~\cite{halko_finding_2011} to reduce the cost.

\section{Numerical experiments}

We numerically compare the performance of PLS to \emph{maximum likelihood} (ML) and \emph{compressed sensing} (CS), respectively. Additional numerical studies for structured POVMs can be found in the appendix.

Fig.~\ref{fig:mlvsPLI_varying_repetitions} compares ML and PLS for Pauli basis measurements in dimension $d=2^4$. 
The trace-norm error incurred by PLS is within a factor of two of ML for low-rank states. Additional simulations (not included here) indicate that this gap closes for full-rank states.

\begin{figure} 
\includegraphics[width=0.44\textwidth]{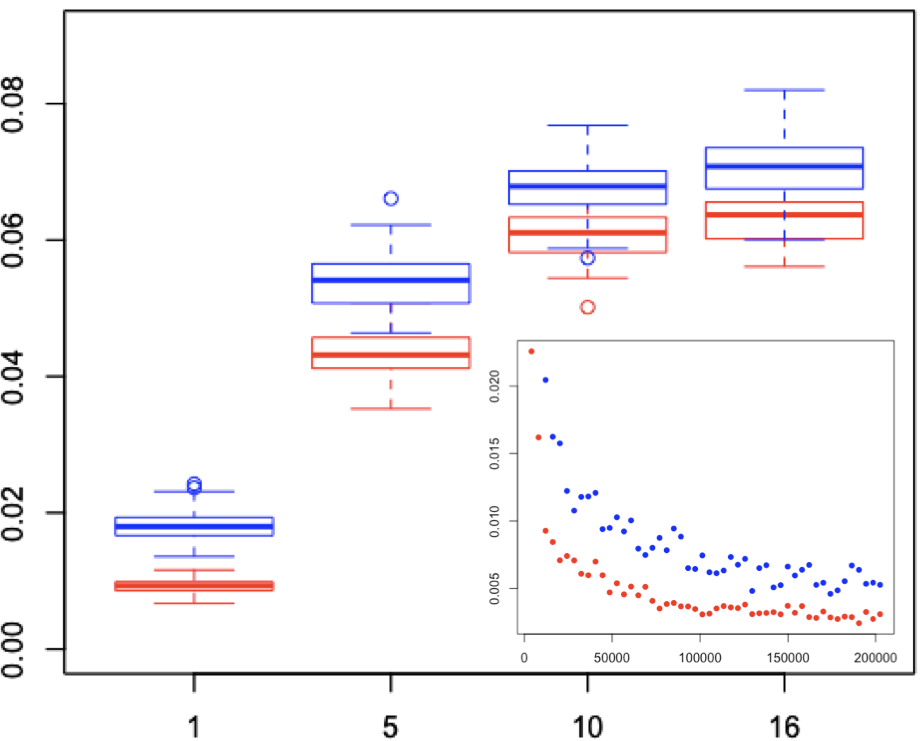}
\caption{PLS (blue) vs.\ ML (red) for 4-qubit Pauli basis measurements: boxplots of trace distance error for ML vs.\ PLS for 100 datasets generated with random states of rank 1,5,10 and 16, and 200 repetitions per setting.  \emph{Inset:} trace distance error as a function of sample size for a pure target state.} \label{fig:mlvsPLI_varying_repetitions}
\end{figure}

CS is a natural benchmark for low-rank tomography.  The papers \cite{gross_quantum_2010,liu_universal_2011} apply to Pauli observables, and they show that a random choice of $m\geq C rd \log^6 (d)$ Pauli observables is sufficient to reconstruct \emph{any} rank-$r$ state. 
The actual reconstruction is performed by solving a convex optimization problem, namely the least-squares fit over the set of quantum states \cite{kalev_quantum_2015,kabanava_stable_2016}. 
Numerical studies from \cite{flammia_quantum_2012} suggest that $m=256$ is appropriate for $d=2^5$ and $r=1$. 
Fig.~\ref{fig:cs} shows that PLS consistently outperforms the CS estimator in this regime.
Importantly, PLS was also much faster to evaluate than both, ML and CS.

\begin{figure} 
\includegraphics[width=0.45\textwidth]{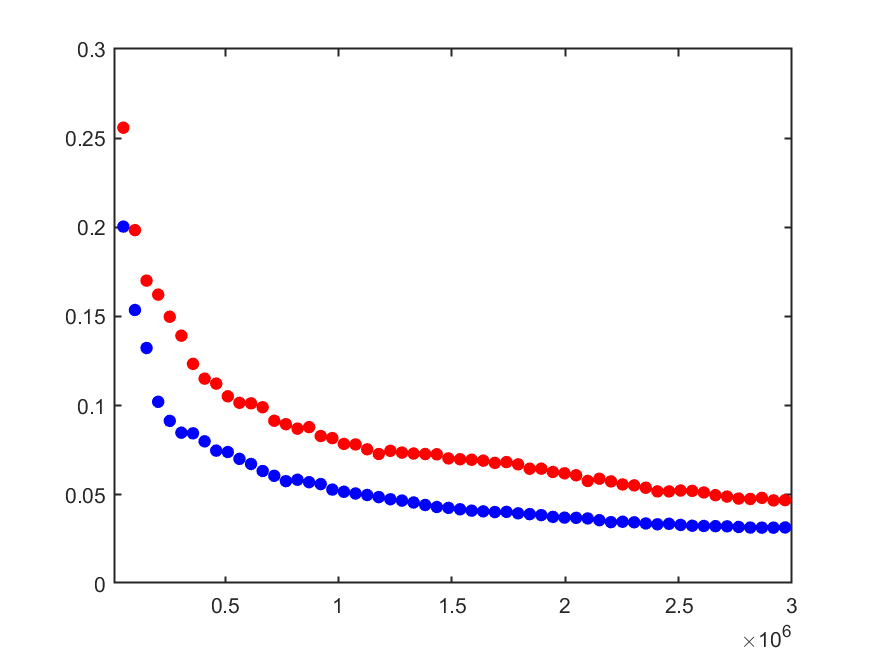}
\caption{PLS (blue) vs.\ CS (red) for $m$ 5-qubit Pauli observables and a pure target state: Trace distance error for CS ($m=256$)  and PLS ($m=1024$) as a function of (total) sample size.} \label{fig:cs}
\end{figure}

\section{Conclusion and Outlook}

Linear inversion is one of the oldest and simplest approaches to solve the practically important task of quantum state tomography.
In this work, we introduced a variant called \emph{projected least squares} (PLS)
that projects the least-squares estimator onto the set of all quantum states.
Not only is this estimator numerically cheap, but it comes with
strong, non-asymptotic convergence guarantees.  These results are derived using
concentration inequalities for sums of random matrices, and they exploit the
randomness inherent in quantum experiments. 

We show that PLS is competitive, both in theory and in practice.  For a variety of measurements, the results
match the \emph{best} existing theoretical results for the sampling rate of other tomography methods. 
In particular, for the uniform POVM, an order of $\frac{r d^2}{\epsilon^2}$ samples suffice to reconstruct any rank-$r$ state up to accuracy $\epsilon$ in trace distance.  This result also saturates existing lower bounds \cite{haah_sample_2017} on the minimal sampling rate required for \emph{any} tomographic procedure with independent measurements.
Numerical studies underline these competitive features.

\paragraph*{\textbf{Outlook:}} 
Corollary~\ref{cor:error_bars} is not (yet) optimal. \emph{Bootstrapping} could be used to obtain tighter confidence regions, and the low computational cost of PLS may speed up this process considerably. 
It also seems fruitful to combine the ideas presented here with recent insights from \cite{scholten_behavior_2018}.
Finally, the proof of Theorem~\ref{thm:main_result} indicates that PLS is stable with respect to time-dependent state generation (\emph{drift}). 
We intend to address these points in future work.

\paragraph*{\textbf{Acknowledgements:}}

The authors thank  Philippe Faist, Anirudh Acharya and Theodore Kypraios for fruitful discussions and valuable feedback.
RK and JT are supported by ONR Award No.\ N00014-17-12146. RK also acknowledges funding provided by the Institute of Quantum Information
and Matter, an NSF Physics Frontiers Center (NSF Grant PHY-1733907).

\cleardoublepage

\bibliographystyle{apsrev4-1}
\bibliography{tomography}

\cleardoublepage

\part*{Appendix}

At the heart of this work is \emph{projected least squares} (PLS) -- a simple point estimator for quantum state tomography from tomographically complete measurements $\left\{M_1,\ldots,M_m \right\} \subset \mathbb{H}_d$. PLS is a three-step procedure, see also Tab.~\ref{tab:box}:
\begin{enumerate}
\item Estimate outcome probabilities 
by frequencies. 
\item Construct the least squares (linear inversion) estimator:
\begin{align}
\hat{L}_n = \underset{X \in \mathbb{H}_d}{\textrm{argmin}} \quad \sum_{i=1}^m \left( f_i - \mathrm{tr}(M_i X ) \right)^2 .
\label{eq:linear_inversion_appendix}
\end{align}
\item Project onto the set of all quantum states:
\begin{equation}
\hat{\rho}_n = \underset{\textrm{$\sigma$ is a quantum state}}{\textrm{argmin}} \| \hat{L}_n - \sigma \|_2. \label{eq:pli_appendix}
\end{equation}
\end{enumerate}
We analyze the performance of PLS for a variety of concrete measurement scenarios: \emph{structured POVMs}, \emph{Pauli observables}, \emph{Pauli basis measurements} and the \emph{uniform POVM}. For each of them, $\hat{\rho}_n$ may be equipped with rigorous non-asymptotic confidence regions in trace distance. 
In this appendix, we complement the rather succinct presentation in the main text with additional explanations, motivations and more detailed arguments. 

\paragraph*{\textbf{Outline:}} 

In Section~\ref{sec:LI_estimators} we provide explicit least squares solutions \eqref{eq:linear_inversion_appendix} for the different measurements. We also review essential features and properties of the individual scenarios to provide context.

Section~\ref{sec:concentration} contains the main conceptual insight of this work: least squares estimators may be interpreted as sums of independent random matrices --the randomness is due to the fundamental laws of quantum mechanics (Born's rule). This allows us to apply strong matrix-valued concentration inequalities to show that, with hight probability,  $\hat{L}_n$ is close to the true target state in operator norm.

Section~\ref{sec:conversion} is devoted to showing that closeness of $\hat{L}_n$ in operator norm implies closeness of $\hat{\rho}_n$ in trace norm. 

We combine these two insights in Section~\ref{sec:effective_rank} to arrive at the main result of this work: convergence guarantees for the PLS estimator in trace norm. The result derived there is a strict generalization of Theorem~\ref{thm:main_result} quoted in the main text. It extends to the notion of \emph{effective rank} which may be beneficial in concrete applications. We illustrate this potential benefit with a caricature of a faulty state preparation apparatus.

In Section~\ref{sec:uniform} yet stronger convergence guarantees for the uniform POVM are derived. The proof technique is completely different and we believe that it may be of independent interest to the community.

Finally, we present additional numerical experiments in Section~\ref{sec:numerics}.

\section{Closed-form expressions for least squares estimators} \label{sec:LI_estimators}

As outlined in the main text, any POVM measurement can be viewed as a linear map $\mathcal{M}: \mathbb{H}_d \to \mathbb{R}^m$, defined component-wise as $\left[ \mathcal{M}(X) \right]_i = \mathrm{tr}(M_i X)$ for $i \in \left[m \right]$. This map is injective if and only if the measurement is tomographically complete. Provided that this is the case, the least squares estimator \eqref{eq:linear_inversion_appendix} admits a unique solution:
\begin{equation*}
\hat{L}_n = \left( \mathcal{M}^\dagger \mathcal{M} \right)^{-1} \mathcal{M}^\dagger (f_n),
\end{equation*}
where $f_n \in \mathbb{R}^m$ subsumes the individual frequency estimates. In this section, we evaluate this formula explicitly for different types of prominent measurements.

\subsection{The uniform POVM and $2$-designs}

The \emph{uniform/covariant} POVM in $d$-dimensions corresponds to the union of all  (properly re-normalized) rank-one projectors: $\left\{ d | v \rangle \! \langle v| \mathrm{d} v \right\}_{v \in \mathbb{S}^d}$. Here, $\mathrm{d} v$ denotes the unique, unitarily invariant, measure on the complex unit sphere induced by the Haar measure (over the unitary group $U(d)$). 
Its high degree of symmetry allows for analyzing this POVM by means of powerful tools from representation theory. 
This is widely known, see e.g.\ \cite{scott_tight_2006,gross_partial_2015}, but we include a short presentation here to be self-contained.
Define the \emph{frame operator of order $k$}: $F_{(k)} = \int_{\mathbb{S}^d} \left( |v \rangle \! \langle v| \right)^{\otimes k} \mathrm{d}v \in  \mathbb{H}_d^{\otimes k}$. Unitary invariance of $\mathrm{d} v$ implies that this frame operator commutes with every $k$-fold tensor product of a unitary matrix $U \in U(d)$:
\begin{align*}
U^{\otimes k} F_{(k)}
=& \int_{\mathbb{S}^d} \left( U|v \rangle \! \langle v| \right)^{\otimes k} \mathrm{d} v = \int_{\mathbb{S}^d} \left(|\tilde{v} \rangle \! \langle \tilde{v} | U \right)^{\otimes k} \mathrm{d} \tilde{v} \\
=& F_{(k)} U^{\otimes k}.
\end{align*}
Here, we have used a change of variables ($\tilde{v} = U v$) together with the fact that $\mathrm{d}v$ is unitarily invariant ($\mathrm{d} \tilde{v} = \mathrm{d} v$). 
Schur's Lemma -- one of the most fundamental tools in representation theory -- states that any matrix that commutes with every element of a given group representation must be proportional to a sum of the projectors onto the associated irreducible representations (irreps). 
For the task at hand, the representation of interest is the diagonal representation of the unitary group: $U \mapsto U^{\otimes k}$ for all $U \in U(d)$. This representation affords, in general, many irreps that may be characterized using Schur-Weyl duality. 
The \emph{symmetric subspace} $\mathrm{Sym}^{(k)} \subset \left( \mathbb{C}^d \right)^{\otimes k}$ is one of them and corresponds to the subspace of all vectors that are invariant under permuting tensor factors. 
Crucially, $F_{(k)}$ is an average over rank-one projectors onto vectors $|v \rangle^{\otimes k} \in \mathrm{Sym}^{(k)}$ and, therefore, its range must be contained entirely within $\mathrm{Sym}^{(k)}$. Combining this with the assertion of Schur's lemma then yields
\begin{equation}
F_{(k)} = \int_{\mathbb{S}^d} \left( |v \rangle \! \langle v| \right)^{\otimes k} \mathrm{d} v = \binom{d+k-1}{k}^{-1} P_{\mathrm{Sym}^{(k)}} \quad k \in \mathbb{N}, \label{eq:frame_operator}
\end{equation}
The pre-factor $\binom{d+k-1}{k}^{-1} = \mathrm{dim} \left( \mathrm{Sym}^{(k)} \right)^{-1}$ follows from the fact that $F_{(k)}$ has unit trace. 

This closed-form expression is very useful. In particular, it implies that the uniform POVM $\left\{ d| v \rangle \! \langle v| \right\}_{v \in \mathbb{S}^d}$ is almost an isometry. Fix $X \in \mathbb{H}_d$ and compute
\begin{align}
& (d+1)\int_{\mathbb{S}^d} d \langle v| X |v \rangle |v \rangle \! \langle v| \mathrm{d} v \nonumber \\
=& (d+1) d \mathrm{tr}_1 \left( X \otimes \mathbb{I} \int_{\mathbb{S}^d} \left( |v \rangle \! \langle v| \right)^{\otimes 2} \mathrm{d}v \right) \nonumber \\
=& 2 \mathrm{tr}_1 \left( X \otimes \mathbb{I} P_{\mathrm{Sym}^{(2)}} \right),
\label{eq:uniform_aux1}
\end{align}
where $\mathrm{tr}_1 (A \otimes B) = \mathrm{tr}(A) B$ denotes the partial trace over the first tensor factor. The projector onto the totally symmetric subspace of two parties has an explicit representation: $P_{\mathrm{Sym}^{(2)}} = \frac{1}{2} \left( \mathbb{I} + \mathbb{F} \right)$, where $\mathbb{F}$
 denotes the \emph{flip operator}, i.e.\ $\mathbb{F} |x \rangle \otimes |y \rangle = |y \rangle \otimes |x \rangle$ for all $|x \rangle, |y \rangle \in \mathbb{C}^d$ and extend it linearly to the entire tensor product. Inserting this explicit characterization into Eq.~\eqref{eq:uniform_aux1} yields
\begin{align}
 (d+1)\int_{\mathbb{S}^d} d \langle v| X |v \rangle |v \rangle \! \langle v| \mathrm{d} v 
=& \mathrm{tr}_1 \left( X \otimes \mathbb{I} \left(\mathbb{I} + \mathbb{F} \right) \right) \nonumber\\
=& X + \mathrm{tr}(X) \mathbb{I}. \label{eq:2design_appendix}
\end{align}
We emphasize that the full symmetry of the uniform POVM is not required to derive this formula: Eq.~\eqref{eq:frame_operator} for $k=2$ is sufficient. 
This motivates the following definition:

\begin{definition}[2-design]
A (finite) set of $m$ rank-one projectors $\left\{ |v_i \rangle \! \langle v_i | \right\}_{i=1}^m$ is called a \emph{(complex-projective) 2-design} if
\begin{align*}
\frac{1}{m} \sum_{i=1}^m \left( |v_i \rangle \! \langle v_i| \right)^{\otimes 2} = \binom{d+1}{2}^{-1} P_{\mathrm{Sym}^{(2)}}.
\end{align*}
\end{definition}
Taking the partial trace of this expression yields
\begin{equation*}
\frac{1}{m} \sum_{i=1}^m |v_i \rangle \! \langle v_i| = \frac{1}{d} \mathbb{I},
\end{equation*}
highlighting that each 2-design is proportional to a POVM $\mathcal{M}=  \left\{ \frac{d}{m} |v_i \rangle \! \langle v_i| \right\}_{i=1}^m$. Moreover, viewed as a map $\mathcal{M}: \mathbb{H}_d \to \mathbb{R}^m$, every such POVM obeys
\begin{align*}
\mathcal{M}^\dagger \mathcal{M} (X)
= \frac{d^2}{m^2} \sum_{i=1}^m \langle v_i| X |v_i \rangle |v_i \rangle \! \langle v_i| 
= \frac{d\left( X+ \mathrm{tr}(X) \mathbb{I} \right)}{(d+1)m}
\end{align*}
for any $X \in \mathbb{H}_d$, which can be readily inverted:
\begin{align}
\left( \mathcal{M}^\dagger \mathcal{M} \right)^{-1} (X)
= \frac{m}{d} \left((d+1)X - \mathrm{tr}(X) \mathbb{I} \right).
\label{eq:near_isometry} 
\end{align}
Inserting this formula into the closed-form expression of the linear-inversion estimator yields
\begin{align}
\hat{L}_n =& \left( \mathcal{M}^\dagger \mathcal{M} \right)^{-1}
\left( \mathcal{M}^\dagger (f) \right) \\
=&\left(\mathcal{M}^\dagger \mathcal{M}\right)^{-1}\left( \frac{d}{m} \sum_{i=1}^m f_i |v_i \rangle \! \langle v_i| \right) \nonumber \\
=& \sum_{i=1}^m f_i \left( (d+1) |v_i \rangle \! \langle v_i| - \mathrm{tr} \left( |v_i \rangle \! \langle v_i| \right) \mathbb{I} \right) \nonumber \\
=& (d+1) \sum_{i=1}^m f_i |v_i \rangle \! \langle v_i| - \mathbb{I}
\label{eq:LI_2design_appendix}
\end{align}
for any frequency vector $f_n \in \mathbb{R}^m$.
Mathematically, this is a consequence of the fact that 2-design POVMs ``almost'' form a tight frame on $\mathbb{H}_d$. 
The close connection to well-behaved, tomographically complete, rank-one POVMs has spurred considerable interest in the identification of 2-designs. 
Over the past decades, the following concrete examples have been identified:
\begin{itemize}
\item[(i)] \emph{Equiangular lines (SIC POVMs):} a family of $m$ unit vectors $|v_1 \rangle, \ldots,| v_m \rangle \in \mathbb{S}^d$ is equiangular, if $\left| \langle v_i, v_j \rangle \right|^2$ is constant for all $i \neq j$.  
The maximal cardinality of such a set is $m=d^2$ in which case the angle must be fixed: $\left| \langle v_i, v_j \rangle \right|^2=\frac{1}{d+1}$. Such maximal sets of equiangular lines are known to form 2-designs \cite{renes_symmetric_2004} and have been termed \emph{symmetric, informationally complete (SIC) POVMs}. This nomenclature underlines the importance of Eq.~\eqref{eq:2design_appendix} for the original quantum motivation of the study of equiangular lines. While several explicit constructions of SIC POVMs exist, the general question of their existence remains an intriguing open problem.

\item[(ii)] \emph{Mutually unbiased bases (MUBs):} 
Two orthonormal bases $\left\{ |b_i \rangle \right\}_{i=1}^d$ and $\left\{ |c_i \rangle \right\}_{i=1}^d$ of $\mathbb{C}^d$ are \emph{mutually unbiased} if $\left| \langle b_i, c_j \rangle \right|^2 = \frac{1}{d}$ for all $1 \leq i,j \leq d$. 
The study of such mutually unbiased bases (MUBs) has a rich history in quantum mechanics that dates back to Schwinger \cite{schwinger_unitary_1960}. It is known that at most $(d+1)$ pairwise mutually unbiased bases can exist in dimension $d$ and explicit algebraic constructions are known for prime power dimensions ($d=p^k$). 
Klappenecker and Roettler \cite{klappenecker_mutually_2005} showed that maximal sets of MUBs are guaranteed to form 2-designs. 

\item[(iii)] \emph{stabilizer states (STABs):} the stabilizer formalism is one of the cornerstones of quantum computation, fault tolerance and error correction, see e.g.\ \cite{nielsen_quantum_2011}. 
Let $\mathcal{P}_k$ be the Pauli group on $k$ qubits ($d=2^k$), i.e.\ the group generated by $k$-fold tensor products of the elementary Pauli matrices. 
It is then possible to find maximal abelian subgroups $\mathcal{S} \subset \mathcal{P}_k$ of size $d=2^k$. Since all matrices $W \in \mathcal{S}$ commute, they can be simultaneously diagonalized and determine a single unit vector which is the joint eigenvector with eigenvalue $+1$ of all the matrices in $\mathcal{S}$ (provided that $- \mathbb{I} \notin \mathcal{S}$). Such vectors are called \emph{stabilizer states} (STAB) and the group $\mathcal{S} \subset \mathcal{P}_k$ is its associated \emph{stabilizer group}. 
A total of $m = 2^k \prod_{i=0}^k \left( d^i+1 \right) = 2^{\frac{1}{2} k^2 + o(k)}$ different stabilizer states can be generated this way. The union of all of them is actually known to form a 3-design \cite{kueng_qubit_2015, zhu_clifford_2017, webb_clifford_2015} and, therefore, also a 2-design. The latter is also a consequence of earlier results \cite{gross_evenly_2007,dankert_exact_2009} 
\end{itemize}

\subsection{Pauli observables}

For $d=2^k$, the Pauli matrices $W_1,\ldots,W_{d^2} \in \mathbb{H}_d$ arise from all possible $k$-fold tensor products of elementary  Pauli matrices $\left\{\mathbb{I},\sigma_x,\sigma_y,\sigma_z \right\} \subset \mathbb{H}_2$. They are well-known to form a unitary operator basis:
\begin{equation}
X = \frac{1}{d} \sum_{i=1}^{d^2} \mathrm{tr} \left( W_i X \right) W_i, \label{eq:unitary_basis}
\end{equation}
for all $X \in \mathbb{H}_d$. 
While they do constitute observables, Pauli matrices by themselves are not POVMs. However, every observable $W_i$ may be associated with a two-outcome POVM $\mathcal{M}_i = \left\{P_i^\pm \right\} = \left\{\frac{1}{2} \left( \mathbb{I} + W_i \right) \right\}$. 
The union $\bigcup_{i=1}^{d^2} \mathcal{M}_i$ of all these 2-outcome POVMs consitutes a linear map $\mathcal{M}: \mathbb{H}_d \to \mathbb{R}^{2m}$ that obeys
\begin{align*}
\mathcal{M}^\dagger \mathcal{M}(X)
=& \sum_{i=1}^{d^2} \left( \mathrm{tr} \left( P^+_i X \right) P^+_i + \mathrm{tr} \left( P^-_i X \right) P^-_i \right) \\
=& \sum_{i=1}^{d^2} \frac{1}{2} \left(\mathrm{tr}(X) \mathbb{I} + \mathrm{tr}(W_i X ) W_i \right) \\
=& \frac{d}{2} \left( d\mathrm{tr}(X) \mathbb{I} +  X \right),
\end{align*}
where the last line is due to Eq.~\eqref{eq:unitary_basis}.
Once more, this expression can be readily inverted:
\begin{align*}
\left( \mathcal{M}^\dagger \mathcal{M} \right)^{-1} (X)
= \frac{2}{d} X - \frac{2\mathrm{tr}(X)}{d^2+1} \mathbb{I}.
\end{align*}
Before we continue, we note that one Pauli matrix is equal to the identity, say $W_1 = \mathbb{I}$, and the associated POVM is trivial. Hence, we suppose  that $n$ copies of $\rho$ are distributed equally among all $d^2-1$ non-trivial 2-Outcome POVMs $\mathcal{M}_i$. We denote the resulting frequencies by $\left[f\right]_i^\pm$ and suppress the dependence on the number of samples. Then, the explicit solution to the least squares problem becomes
\begin{align*}
\hat{L}_n =& \left( \mathcal{M}^\dagger \mathcal{M} \right)^{-1} \left( \mathcal{M}^\dagger (f_n) \right)  \\
=& \left( \mathcal{M}^\dagger \mathcal{M} \right)^{-1} (\mathbb{I}) + \sum_{i=2}^{d^2} \sum_{o=\pm}\left[f \right]_i^o \left( \mathcal{M}^\dagger \mathcal{M} \right)^{-1} \left( P_i^o \right)   \\
=& \frac{2}{d(d^2+1)}\mathbb{I}
+ \sum_{i=2}^{d^2} \left[ f \right]_i^+ \left( \frac{1}{d} \left( \mathbb{I} + W_i \right) - \frac{d}{d^2+1} \mathbb{I} \right) \\
+& \sum_{i=2}^{d^2} \left[ f \right]_i^- \left( \frac{1}{d} \left( \mathbb{I} - W_i \right) - \frac{d}{d^2+1} \right) \\
=& \frac{1}{d} \sum_{i=2}^{d^2} \left( \left[ f \right]_i^+ - \left[ f \right]_i^- \right) W_i 
+ \frac{2 + \sum_{i=2}^{d^2} \left( \left[ f \right]_i^+ + \left[ f \right]_i^-  \right)}{d(d^2+1)} \mathbb{I}.
\end{align*}
We can simplify this expression further by noticing that each 2-outcome POVM is dichotomic: either $+$ or $-$ is observed for every run. This implies $\left[ f \right]_i^+ + \left[ f \right]_i^- =1$ and, by extension, $\sum_{i=2}^{d^2} \left( \left[ f_i \right]_i^+ + \left[f_i \right]_i^- \right) = d^2-1$. Hence,
\begin{align}
\hat{L}_n =& \frac{1}{d} \sum_{i=2}^{d^2} \left( \left[ f \right]_i^+ - \left[ f \right]_i^- \right) W_i + \frac{1}{d} \mathbb{I}\nonumber \\
=&
 \frac{1}{d} \sum_{i=1}^{d^2} \left( \left[ f \right]_i^+ - \left[ f \right]_i^- \right)W_i, \label{eq:LI_pauli_appendix}
\end{align}
because $\left[f \right]_1^+ =1$. 
This is the formula from the main text and has a compelling interpretation: the difference $\hat{\mu}_i = \left[ f_i \right]_i^+ - \left[ f_i \right]_i^-$ is an empirical estimate for the expectation value $\mu_i = \mathrm{tr} \left( W_i X \right)$ of the $i$-th Pauli observable. Finally, note that this estimator is again unbiased with respect to random fluctuations in the sample statistics:
\begin{equation}
\mathbb{E} \left[ \hat{L}_n \right] = \frac{1}{d} \sum_{i=1}^d \mathrm{tr}\left(W_i \rho \right) W_i = \rho. \label{eq:pauli_unbiased}
\end{equation}

\subsection{Pauli basis measurements}

Before considering the general case, we find it instructive to consider the single qubit case in more detail. For now, fix $d=2$ and note that there are three non-trivial Pauli matrices $\sigma_s$ with $s \in \left\{ x,y,z \right\}$. We may associate each $\sigma_s$ with a 2-outcome POVM that is also a basis measurement: $\frac{1}{2}\left( \mathbb{I} \pm \sigma_s \right)=|b_\pm^{(s)} \rangle \! \langle b_\pm^{(s)}|$.
For $s \neq s'$
\begin{align*}
\left| \langle b_{\pm}^{(s)} , b_{o'}^{(\pm)} \rangle \right|^2 
= \frac{1}{4} \mathrm{tr} \left( \left( \mathbb{I}  \pm \sigma_s \right) \left( \mathbb{I}  \pm \sigma_{s'}\right) \right) 
= \frac{1}{2},
\end{align*}
because $\sigma_s,\sigma_{s'}$ and $\sigma_{s} \sigma_{s'}$ have vanishing trace. This implies that the six vectors $|b^{(s)}_{o} \rangle \! \langle b^{(s)}_{o}|$ with $o \in \left\{ \pm 1 \right\}$ form a maximal set of $3=(d+1)$ mutually unbiased bases. Such vector sets form spherical 2-designs and Eq.~\eqref{eq:2design_appendix} ensures for any $X \in \mathbb{H}_d$
\begin{equation}
 \sum_{s,o} \langle b^{(s)}_o | X |b^{(s)}_o \rangle |b^{(s)}_o \rangle \! \langle b^{(s)}_o| =  \left( X+ \mathrm{tr}(X) \mathbb{I}\right) = 3\mathcal{D}_{1/3}(X). \label{eq:single_qubit}
\end{equation}
Here, $\mathcal{D}_{1/3}: \mathbb{H}_d \to \mathbb{H}_d$ denotes a single-qubit depolarizing channel with loss parameter $p =\frac{1}{3}$. 

This behavior extends to multi-qubit systems, i.e.\ $d=2^k$. 
Suppose that we perform $k$ local (single-qubit) Pauli measurements on a $k$-qubit state $\rho \in \mathbb{H}_d$. 
Then, there are a total of $k$ potential combinations that we label by a string $\vec{s} =(s_1,\ldots,s_k) \in \left\{x,y,z \right\}^k$. 
Each of them corresponds to a POVM $\mathcal{M}^{(\vec{s})}$ with $2^k=d$ outcomes that we label by $\vec{o} =(o_1,\ldots,o_k) \in \left\{ \pm 1 \right\}^k$. 
The POVM element associated with index $\vec{s}$ and outcome $\vec{o}$ has an appealing tensor-product structure: $|b^{(\vec{s})}_{\vec{o}} \rangle \! \langle b^{(\vec{s})}_{\vec{o}}| = \bigotimes_{i=1}^k |b_{o_i}^{(s_i)} \rangle \! \langle b_{o_i}^{(s_i)} |$. 
Let $\mathcal{M} = \bigcup_{\vec{s}} \mathcal{M}^{(\vec{s})}: \mathbb{H}_d \to \mathbb{R}^{3^k} \times \mathbb{R}^{2^k}$ denote the union of all such basis measurements.
Then, the following formula is true for tensor product matrices $X = \bigotimes_{i=1}^k X_i$ and $X_i \in \mathbb{H}_2$:
\begin{align}
\mathcal{M}^\dagger \mathcal{M}(X)&=  \sum_{\vec{s},\vec{o}} \langle b^{(\vec{s})}_{\vec{o}}| X |b^{(\vec{s})}_{\vec{o}} \rangle | b^{(\vec{s})}_{\vec{o}} \rangle \! \langle b^{(\vec{s})}_{\vec{o}}| \nonumber \\
=& \bigotimes_{i=1}^k \left( \sum_{s_i,o_i} \langle b^{(s_i)}_{o_i} | X_i | b^{(s_i)}_{o_i} \rangle |b^{(s_i)}_{o_i} \rangle \! \langle b^{(s_i)}_{o_i}| \right) \nonumber \\
=& 3^k\bigotimes_{i=1}^k \mathcal{D}(X_i ) 
= 3^k\mathcal{D}^{\otimes k}_{1/3} (X), \label{eq:pauli_basis_aux2}
\end{align}
where we have used Eq.\eqref{eq:single_qubit}. Linear extension ensures that this formula remains valid for arbitrary matrices $X \in \mathbb{H}_d$. Since the single qubit depolarizing channel is invertible, the same is true for its $k$-fold tensor product and we conclude
\begin{align*}
\left( \mathcal{M}^\dagger \mathcal{M} \right)^{-1} (X) = \frac{1}{3^k} \left( \mathcal{D}_{1/3}^{\otimes k} \right)^{-1} (X).
\end{align*}
Inserting this explicit expression into the closed-form expression for the least squares estimator yields
\begin{align*}
\hat{L}_n =& \left( \mathcal{M}^\dagger \mathcal{M} \right)^{-1} \left( \mathcal{M}^\dagger (\vec{f}) \right) \\
=& \frac{1}{3^k} \sum_{\vec{s},\vec{o}} \left[ f \right]^{(\vec{s})}_{\vec{o}} \left( \mathcal{D}^{\otimes k}_{1/3} \right)^{-1} \left( |b^{(\vec{s})}_{\vec{o}} \rangle \! \langle b^{(\vec{s})}_{\vec{o}}| \right),
\end{align*}
as advertised in the main text. Here, $\left[ f \right]_{\vec{o}}^{(\vec{s})}$ is assumed to be a frequency approximation to $p_o^{\vec{(s)}} = \langle b^{(\vec{s})}_{\vec{o}} | \rho | b^{(\vec{s})}_{\vec{o}} \rangle$.

We conclude this section with a single-qubit observations that allows for characterizing this expression in a more explicit fashion.
Note that one may rewrite $\mathcal{D}_{1/3}(X)$ as $\frac{\mathrm{tr}(X)}{2} \mathbb{I} + \frac{1}{6} \sum_s \mathrm{tr} \left( \sigma_s X \right) \sigma_s$. This facilitates the computation of the single-qubit inverse:
\begin{equation*}
\mathcal{D}_{1/3}^{-1} \left( X \right)
= \frac{\mathrm{tr}(X)}{2} \mathbb{I} + \frac{3}{2} \sum_s \mathrm{tr} \left( \sigma_s  X \right)\sigma_s
\end{equation*}
and, in particular
\begin{align*}
\mathcal{D}^{-1}_{1/3} \left( |b^{(s)}_{\pm} \rangle \! \langle b^{(s)}_{\pm}| \right) 
=& \frac{1}{2}\left( \mathcal{D}^{-1}(\mathbb{I}) \pm \mathcal{D}^{-1} (\sigma_s ) \right) \nonumber \\
=& \frac{1}{2} \left( \mathbb{I} \pm 3 \sigma_s \right)
= 3  |b^{(s)}_\pm \rangle \! \langle b^{(s)}_\pm| - \mathbb{I}. 
\end{align*}
This in turn implies
\begin{align}
\hat{L}_n = \frac{1}{3^k} \sum_{\vec{s},\vec{o}} \left[ f \right]_{\vec{o}}^{(\vec{s})} \bigotimes_{i=1}^k \left( 3 |b_{o_i}^{(s_i)} \rangle \! \langle b_{o_i}^{(s_i)} | - \mathbb{I} \right). \label{eq:LI_pauli_basis2_appendix}
\end{align}
which, again, is an unbiased estimator with respect to random fluctuations in the sample statistics.

Finally, we also point out another consequence that will be important later on: 
\begin{align}
\left( \mathcal{D}^{-1}_{1/3} \left( |b^{(s)}_{o} \rangle \! \langle b^{(s)}_{o}| \right)  \right)^2
=& 5 \mathcal{D}_{3/5} \left( |b_o^{(s)} \rangle \! \langle b_o^{(s)}| \right),  \label{eq:pauli_basis_aux1}
\end{align}
where $\mathcal{D}_{3/5}$ is another single-qubit depolarizing channel.

\section{the matrix Bernstein inequality and concentration in operator norm} \label{sec:concentration}

Scalar concentration inequalities provide sharp bounds on the probability of a sum of independent random variables deviating from their mean. 
Classical examples include Hoeffding's, Chernoff's and Bernstein's inequality -- all of which have found widespread use in a variety of scientific disciplines. 
The main results of this work are based on a matrix generalizations of these classical statements -- in particular the \emph{matrix Bernstein inequality} developed by one of the authors, see \cite[Theorem~1.4]{tropp_user-friendly_2012}.

\begin{theorem}[Matrix Bernstein inequality] \label{thm:bernstein}
Consider a sequence of $n$ independent, hermitian random matrices $A_1,\ldots,A_n \in \mathbb{H}_d$. Assume that each $A_i$ satisfies
\begin{align*}
\mathbb{E} \left[ A_i \right] = 0 \quad \textrm{and} \quad \| A_i \|_\infty \leq R\textrm{ almost surely}.
\end{align*}
Then,  for any $t >0$
\begin{equation*}
\mathrm{Pr} \left[ \left\| \sum_{i=1}^n \left( A_i - \mathbb{E} \left[ A_i \right] \right) \right\|_\infty \geq t \right] \leq 
\begin{cases}
d \exp \left( - \frac{3 t^2}{8 \sigma^2} \right) & t \leq \frac{\sigma^2}{R}, \\
d \exp \left( - \frac{3t}{8R} \right) & t \geq \frac{\sigma^2}{R},
\end{cases}
\end{equation*}
where $\sigma^2 = \left\| \sum_{i=1}^n \mathbb{E} \left[ A_i^2 \right] \right\|_\infty$.
\end{theorem}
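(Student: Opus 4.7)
The plan is to follow the matrix Laplace transform (Cram\'er--Chernoff) method adapted to the non-commutative setting. Writing $Y = \sum_{i=1}^n A_i$ (the summands are already centered by hypothesis), the goal is to control $\lambda_{\max}(\pm Y)$ separately; the operator-norm statement then follows by a union bound over the two tails, with the factor of two absorbed into the prefactor.

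First I would invoke the matrix Markov inequality: for any $\theta > 0$,
\begin{equation*}
\mathrm{Pr}[\lambda_{\max}(Y) \geq t] \leq e^{-\theta t}\, \mathbb{E}\, \mathrm{tr}\, e^{\theta Y}.
\end{equation*}
The non-trivial step is to push the expectation through the matrix exponential. This is where Lieb's concavity theorem enters: the map $X \mapsto \mathrm{tr}\,\exp(H + \log X)$ is concave on positive-definite matrices. Iterating Jensen's inequality over the $n$ independent summands yields the master bound
\begin{equation*}
\mathbb{E}\,\mathrm{tr}\,e^{\theta Y} \leq \mathrm{tr}\,\exp\!\left(\sum_{i=1}^n \log \mathbb{E}\,e^{\theta A_i}\right),
\end{equation*}
which turns a probabilistic problem into a deterministic matrix-analytic one.

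Next I would bound each matrix cumulant generating function. Because $A_i$ is centered and $-R\mathbb{I} \preceq A_i \preceq R\mathbb{I}$ almost surely, the semidefinite bound $A_i^k \preceq R^{k-2} A_i^2$ for $k \geq 2$ combined with the Taylor expansion of $e^{\theta A_i}$ and operator monotonicity of $\log$ gives
\begin{equation*}
\log \mathbb{E}\,e^{\theta A_i} \preceq \frac{\theta^2/2}{1 - \theta R/3}\, \mathbb{E}\,A_i^2, \qquad 0 < \theta < 3/R.
\end{equation*}
Summing over $i$ and applying the monotonicity of the trace-exponential with respect to the semidefinite order produces
\begin{equation*}
\mathbb{E}\,\mathrm{tr}\,e^{\theta Y} \leq d\,\exp\!\left(\frac{\theta^2 \sigma^2/2}{1 - \theta R/3}\right).
\end{equation*}

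Finally I would optimize $\theta \in (0, 3/R)$. The standard choice $\theta = t/(\sigma^2 + Rt/3)$ yields the unified Bernstein bound $d\exp\!\left(-\tfrac{t^2/2}{\sigma^2 + Rt/3}\right)$. In the sub-Gaussian regime $t \leq \sigma^2/R$ the denominator is at most $4\sigma^2/3$, so the exponent reduces to $-3t^2/(8\sigma^2)$; in the sub-exponential regime $t \geq \sigma^2/R$ the denominator is at most $4Rt/3$, producing $-3t/(8R)$. The lower tail is handled identically by replacing $A_i$ with $-A_i$ (which preserves $\sigma^2$). The main obstacle, and the conceptual core of the whole argument, is the inequality $\mathbb{E}\,\mathrm{tr}\,e^{\theta Y} \leq \mathrm{tr}\,\exp(\sum_i \log \mathbb{E}\,e^{\theta A_i})$: a naive scalar-style Chernoff argument fails because $e^{\theta(A+B)} \neq e^{\theta A}e^{\theta B}$ in the non-commutative setting, and Lieb's concavity theorem is the essential non-commutative substitute that rescues the Cram\'er--Chernoff strategy.
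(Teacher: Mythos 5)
The paper never proves this theorem: it imports it from the cited reference \cite{tropp_user-friendly_2012} (Theorem~1.4 there), so there is no internal proof to compare against. Your proposal reconstructs precisely the canonical proof of that cited result---the matrix Laplace/Chernoff bound $\mathrm{Pr}[\lambda_{\max}(Y)\geq t]\leq e^{-\theta t}\,\mathbb{E}\,\mathrm{tr}\,e^{\theta Y}$, Lieb's concavity theorem to obtain the subadditivity of matrix cumulant generating functions, the Bernstein-type bound $\log\mathbb{E}\,e^{\theta A_i}\preceq \frac{\theta^2/2}{1-\theta R/3}\,\mathbb{E}A_i^2$ (your semidefinite estimate $A_i^k\preceq R^{k-2}A_i^2$ is valid spectrally for all $k\geq 2$, including odd $k$), and the choice $\theta=t/(\sigma^2+Rt/3)$ giving the unified exponent $-\frac{t^2/2}{\sigma^2+Rt/3}$, from which the two regimes $-3t^2/(8\sigma^2)$ and $-3t/(8R)$ follow exactly as you compute. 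One caveat: your final union bound over the two tails yields prefactor $2d$ for the operator-norm deviation, whereas the statement as printed in the paper claims $d$; Tropp's theorem has prefactor $d$ only for the one-sided $\lambda_{\max}$ tail, so the factor of two cannot simply be ``absorbed into the prefactor.'' This discrepancy originates in the paper's restatement (transcribing the $\lambda_{\max}$ bound as an $\|\cdot\|_\infty$ bound) rather than in any flaw of your argument, and it is immaterial downstream since it only shifts $d\mapsto 2d$ inside a logarithm.
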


First results of this kind originate in Banach space theory \cite{tomczak_moduli_1974,lust_inegalites1986,pisier_noncommutative_1997,rudelson_random_1999} and were later independently developed in quantum information theory \cite{ahlswede_strong_2002,gross_recovering_2011}.
Further advances by Oliveira \cite{oliveira_sums_2010} and and one of the authors \cite{tropp_user-friendly_2012} led to the result that we employ here. We refer to the monograph \cite{tropp_user-friendly_2012} for a detailed exposition of related work and history.

Similar to the scalar Bernstein inequality, the tail behavior in Theorem~\ref{thm:bernstein} consists of two regimes. Small deviations are suppressed in a subgaussian fashion, while larger deviations follow a subexponential decay. The ratio $\frac{\sigma^2}{R}$ marks the transition from one regime into the other. 
We also note in passing that this result recovers the scalar Bernstein inequality for $d=1$ ($\mathbb{H}_1 \simeq \mathbb{R}$).

\subsection{Concentration for structured POVM measurements}
\label{sub:concentration_structured}

For structured measurements (2-designs) we may rewrite the (plain) least squares estimator \eqref{eq:LI_2design_appendix} as
\begin{align*}
\hat{L}_n
=& (d+1) \sum_{i=1}^m \left[f_n \right]_i |v_i \rangle \! \langle v_i| - \mathbb{I} = \frac{1}{n} \sum_{i=1}^m X_i,
\end{align*}
where each $X_i$ is an i.i.d. copy of the random matrix $X \in \mathbb{H}_d$ that assumes $(d+1)|v_k \rangle \! \langle v_k|-\mathbb{I}$ with probability $\frac{d}{m}\langle v_k| \rho |v_k \rangle$ for all $k \in \left[m \right]$.
Unbiasedness with respect to the sample statistics ensures $\mathbb{E} \left[ \hat{L}_n \right] = \mathbb{E} \left[ X \right] = \rho$. Hence, $\hat{L}_n - \rho$ is a sum of iid, centered random matrices $\frac{1}{n}\left(X_i - \mathbb{E} \left[ X_i \right]\right)$. These obey
\begin{align*}
\| X_i - \mathbb{E} \left[ X_i \right] \|_\infty = &\frac{1}{n}\left\|(d+1) |v_k \rangle \! \langle v_k| - \mathbb{I} -\rho \right\|_\infty
\leq \frac{d}{n} =:R,
\end{align*}
where $k \in \left[ m \right]$ is arbitrary. 
Next, note that the random matrix $X$ obeys
\begin{align*}
\mathbb{E} \left[ \left(X - \mathbb{E} \left[ X \right] \right)^2 \right] 
=& \mathbb{E} \left[ X^2 \right] - \mathbb{E} \left[ X \right]^2
= \mathbb{E} \left[ X^2 \right] - \rho^2
\end{align*}
and also
\begin{align*}
 \mathbb{E} \left[ X^2 \right] 
=& \sum_{k=1}^m \frac{d}{m} \langle v_k| \rho |v_k \rangle \left( (d+1) |v_k \rangle \! \langle v_k| - \mathbb{I} \right)^2 \\
=& \frac{d(d^2-1)}{N} \sum_{k=1}^N \langle v_k| \rho |v_k \rangle |v_k \rangle \! \langle v_k| + \mathbb{I} \\
=& (d-1) \left( \rho + \mathbb{I} \right) + \mathbb{I},
\end{align*}
according to Eq.~\eqref{eq:2design_appendix}. This allows us to bound the variance parameter:
\begin{align*}
\left\|\frac{1}{n}\sum_{i=1}^n \left( \frac{1}{n}(X - \mathbb{E} \left[ X \right] \right)^2 \right\|_\infty
=& \frac{1}{n} \left\| (d+1) \rho + d \mathbb{I} -\rho^2 \right\|_\infty  \\
\leq & \frac{2d}{n} =: \sigma^2.
\end{align*}
The ratio $\frac{\sigma^2}{R} = 2$ indicates that any choice of $\tau \in \left[0,2 \right]$ will fall into the subgaussian regime of the matrix Bernstein inequality and  Theorem~\ref{thm:bernstein} yields 
\begin{align}
\mathrm{Pr} \left[ \left\| \hat{L}_n-\rho \right\|_\infty \geq \tau \right]
=& \mathrm{Pr} \left[ \left\| \frac{1}{n} \sum_{i=1}^n \left( X_i - \mathbb{E} \left[ X_i \right] \right) \right\|_\infty \geq \tau \right] \nonumber \\
\leq &
 d \mathrm{e}^{ - \frac{3 \tau^2 n}{16 d}}.
\label{eq:closeness_2design}
\end{align}

\subsection{Concentration for (global) Pauli observables} \label{sec:paulis1}

We assume that the total number of samples $n$ is distributed equally among the $d^2$ different Pauli measurements.
Similar to before, unbiasedness \eqref{eq:pauli_unbiased} and the explicit characterization of the LI estimator \eqref{eq:LI_pauli_appendix} allow us to write
\begin{align*}
\hat{L}_n -\rho  
= \sum_{k=1}^{d^2} \frac{1}{n} \sum_{i=1}^{n/d^2} \left( X_i^{(k)}- \mathbb{E} \left[ X_i^{(k)} \right] \right).
\end{align*}
Here, each $X_i^{(k)}$ is an independent instance of the random matrix $X^{(k)} = \pm d W_k$ with probability $\frac{1}{2} \left( 1 \pm \mathrm{tr}(W_k \rho) \right)$ each. 
This is a sum of centered random matrices that are independent, but in general not identically distributed. However, independence alone suffices for applying Theorem~\ref{thm:bernstein}.
We note in passing that this would not be the case for earlier (weaker) versions of the matrix Bernstein inequality.
Bound
\begin{equation*}
\frac{1}{n}\left\| X_i^{(k)} - \mathbb{E} \left[ X_i^{(k)} \right] \right\|_\infty = \frac{d}{n} \left\| \left(1 \pm \mathrm{tr} \left( W_i \rho \right) \right) \mathbb{I} \right\|_\infty \leq  \frac{2d}{n} =:R
\end{equation*}
and use the fact that
\begin{align}
\mathbb{E} \left[ \left( X_i^{(k)} - \mathbb{E} \left[ X_i^{(k)} \right]\right)^2 \right]
=& \mathbb{E} \left[ \left( X^{(k)} \right)^2 \right] - \mathbb{E} \left[ X^{(k)} \right]^2 \nonumber \\
 \leq & \mathbb{E} \left[ \left( X^{(k)} \right)^2 \right]  \label{eq:variance_aux}
\end{align}
(in the positive semidefinite order) to considerably simplify the variance computation:
\begin{align*}
 \frac{d^2}{n} \left\| \sum_{k=1}^{d^2} \mathbb{E} \left[ \left( X^{(k)} \right)^2 \right] \right\|_\infty 
=& \frac{1}{n} \left\| \sum_{k=1}^{d^2} \mathbb{I} \right\|_\infty = \frac{d^2}{n} =: \sigma^2,
\end{align*}
because $\left( X^{(k)} \right)^2 = \frac{1}{d^2} \mathbb{I}$.
Applying Theorem~\ref{thm:bernstein} yields
\begin{align*}
\mathrm{Pr} \left[ \left\| \hat{L}_n - \rho \right\|_\infty \geq \tau \right] \leq d \mathrm{e}^{-\frac{3 \tau^2 \tilde{n}}{8}} \quad \tau \in \left[0, d/2 \right].
\end{align*}

\subsection{Concentration for Pauli-basis measurements}

Once more, we assume that the total budget of samples $n$ is distributed equally among all $3^k$ Pauli basis choices. 
Unbiasedness of the LI estimator together with the explicit description \eqref{eq:LI_pauli_basis2_appendix} allows us to once more interpret $\hat{L}_n - \rho$ as a sum of independent, centered random matrices:
\begin{equation*}
\hat{L}_n - \rho 
= \sum_{\vec{s}} \frac{1}{n} \sum_{i=1}^{n/3^k} \left( X_i^{(\vec{s})}- \mathbb{E} \left[ X^{(\vec{s})}_i \right] \right).
\end{equation*}
For each $\vec{s} \in \left\{x,y,z \right\}^k$, $X_i^{(\vec{s})}$ is an independent copy of the random matrix
\begin{align*}
X^{(\vec{s})} = \bigotimes_{i=1}^k \left( 3 |b_{o_i}^{(s_i)} \rangle \! \langle  b_{o_i}^{(s_i)}| - \mathbb{I} \right)  
\end{align*}
with probability $\langle b^{(\vec{s})}_{\vec{o}}| \rho |b^{(\vec{s})}_{\vec{o}} \rangle$ for each $\vec{o} \in \left\{ \pm 1 \right\}^k$.
Jensen's inequality implies
\begin{align*}
&\frac{1}{n} \left\| X^{(\vec{s})} - \mathbb{E} \left[ X^{(\vec{s})} \right] \right\|_\infty \leq  \frac{2}{n} \left\| X^{(\vec{s})} \right\|_\infty \\
=& \frac{2}{n} \prod_{i=1}^k \left\|3 |b_{o_i}^{(s_i)} \rangle \! \langle b_{o_i}^{(s_i)}-\mathbb{I} \right\|_\infty = \frac{2^{k+1}}{n}=:R.
\end{align*}
For the variance, we once more use \eqref{eq:variance_aux}
and compute
\begin{align*}
 \frac{1}{n} \sum_{\vec{s}} \mathbb{E}  \left( X^{(\vec{s})} \right)^2 
=& \sum_{\vec{s},\vec{o}}\langle b^{(\vec{s})}_{\vec{o}} | \rho | b^{(\vec{s})}_{\vec{o}} \rangle \bigotimes_{i=1}^k \left( \mathcal{D}^{-1} \left( |b^{(\vec{s})}_{\vec{o}} \rangle \! \langle b^{(\vec{s})}_{\vec{o}} | \right)\right)^2 \\
=&  \sum_{\vec{s}}\sum_{\vec{o}} \langle b^{(\vec{s})}_{\vec{o}} | \rho | b^{(\vec{s})}_{\vec{o}} \rangle \bigotimes_{i=1}^k \frac{5}{3} \mathcal{D}_{3/5} \left( |b^{(\vec{s})}_{\vec{o}} \rangle \! \langle b^{(\vec{s})}_{\vec{o}} | \right) \\
=& 5^k\mathcal{D}_{3/5}^{\otimes k} \left( \frac{1}{3^k} \sum_{\vec{s},\vec{o}} \langle b_{\vec{o}}^{(\vec{s})} | \rho |b^{(\vec{s})}_{\vec{o}}\rangle |b_{\vec{o}}^{(\vec{s})} \rangle \! \langle b_{\vec{o}}^{(\vec{s})} |\right) \\
=& 5^k\mathcal{D}^{\otimes k}_{3/5} \left( \mathcal{D}_{1/3}^{\otimes k} \left( \rho \right) \right) 
= \frac{5^k}{n} \mathcal{D}^{\otimes k}_{1/5} (\rho),
\end{align*}
where we have used Eq.~\eqref{eq:pauli_basis_aux1} and the fact that the combination of two depolarizing channels is again a depolarizing channel.
This expression can be evaluated explicitly. For $\alpha \subset \left[ k \right]$, let $\mathrm{tr}_{\alpha}(\rho)$ denote the partial trace over all indices contained in $\alpha$. Then, for $X=\bigotimes_{j=1}^k X_j\in \mathbb{H}_d$
\begin{align*}
5^k \mathcal{D}_{1/5}^{\otimes k}\left( \bigotimes_{j=1}^k X_j \right) =& \bigotimes_{j=1}^k \left( \mathrm{tr}(X_j) \mathbb{I} + X_j \right)  \\
=& \sum_{\alpha \subset \left[k \right]} 2^{|\alpha|} \mathrm{tr}_\alpha(X) \otimes \mathbb{I}^{\otimes \alpha} 
\end{align*}
and this extends linearly to all of $\mathbb{H}_d \simeq\mathbb{H}_2^{\otimes k}$. Consequently,
\begin{align*}
\left\| \frac{1}{n} \sum_{\vec{s}} \mathbb{E} \left[ \left( X^{(\vec{s})} \right)^2 \right] \right\|_\infty 
=& \frac{1}{n}\left\| \sum_{\alpha \subset \left[ k \right]} 2^{|\alpha|} \mathrm{tr}_\alpha (\rho) \otimes \mathbb{I}^{\otimes \alpha} \right\|_\infty \\
\leq & \frac{1}{n}\sum_{\alpha \subset \left[ k \right]} 2^{|\alpha|} \| \mathrm{tr}_\alpha (\rho) \|_\infty \| \mathbb{I}^{\otimes \alpha} \|_\infty \\
\leq & \frac{1}{n} \sum_{\alpha \subset \left[ k \right]} 2^{| \alpha|}
= \frac{1}{n}\sum_{j=0}^k \binom{k}{j}2^j \\
=& \frac{(2+1)^k}{n} = \frac{3^k}{n} =: \sigma^2.
\end{align*}
This estimate is actually tight for pure product states of the form $\rho = (|\psi \rangle \! \langle \psi|)^{\otimes k}$.
We may now apply Theorem~\ref{thm:bernstein} to conclude
\begin{equation*}
\mathrm{Pr} \left[ \left\| \hat{L}_n - \rho \right\|_\infty \geq \tau \right] \leq d \exp \left( - \frac{3 n \tau^2}{8 \times 3^k} \right)
\quad \tau \in \left[0,1 \right].
\end{equation*}

\section{Conversion of confidence regions from operator norm to trace norm} \label{sec:conversion}

The final ingredient for the framework presented in this manuscript is a reliable way to transform operator-norm closeness of the (plain) least squares estimator $\hat{L}_n$
into a statement about closeness of the PLS estimator  $\hat{\rho}_n$
in trace distance. Recall that the optimization problem \eqref{eq:pli_appendix} admits an analytic solution \cite{smolin_efficient_2012}. Let $U \mathrm{diag}(\vec{\lambda}) U^\dagger$ be an eigenvalue decomposition of $\hat{L}_n$. Then,
\begin{equation}
\hat{\rho}_n = U \mathrm{diag}\left(\left[ \vec{\lambda} - x_0 \vec{1} \right]^+\right) U^\dagger,
\label{eq:pli_analytic}
\end{equation}
where $x_0$ is chosen such that $\mathrm{tr}(\hat{\rho}_n) =1$ and $\left[ \vec{y} \right]_i^+ = \max \left\{ \left[ \vec{y} \right]_i, 0 \right\}$ denotes thresholding on non-negative components. This solution is \emph{unique}, provided that $\mathrm{tr} \left( \hat{L}_n \right) =1$, which is the case for all the least squares estimators we consider.

The conversion from closeness in operator norm to closeness in trace norm will introduce a factor that is proportional to the \emph{effective rank} of the density matrix $\rho$, rather than a full dimensional factor. For $r \in \mathbb{N}$, we define the best rank-$r$ approximation $\rho_r$ of a quantum state  $\rho \in \mathbb{H}_d$ as the optimal feasible point of 
\begin{equation}
\sigma_r (\rho) =\underset{\mathrm{rank}(Z) \leq r}{\textrm{minimize}}  \quad \| \rho-Z \|_1. \label{eq:approximation_error}
\end{equation}
This problem can be solved analytically.
Let $\rho = \sum_{i=1}^d \lambda_i |x_i \rangle \! \langle x_i|$ be an eigenvalue decomposition with eigenvalues arranged in non-increasing order. Then,
\begin{align*}
Z^\sharp =& \sum_{i=1}^r \lambda_i |x_i \rangle \! \langle x_i|, \quad \textrm{and} \quad 
\sigma_r (\rho) = \sum_{i=r+1}^d \lambda_i = 1 - \mathrm{tr}(\rho_r),
\end{align*}
highlighting that the best rank-$r$ approximation is simply a truncation onto the $r$ largest contributions in the eigenvalue decomposition.
This truncated description is accurate if the residual error $\sigma_r (\rho)$ is small. If this is the case it is reasonable to say that $\rho$ is well approximated by a rank-$r$ matrix $Z^\sharp$ and has \emph{effective rank $r$}.

\begin{proposition} \label{prop:trace_distance}
Suppose that $\hat{L}_n \in \mathbb{H}_d$ obeys $\mathrm{tr} \left( \hat{L}_n \right) = 1$ and $\| \hat{L}_n - \rho \|_\infty \leq \tau$ for some quantum state $\rho \in \mathbb{H}_d$ and $\tau \geq 0$. 
Then, for any $r \in \mathbb{N}$ the PLS estimator $\hat{\rho}_n$ obeys
\begin{equation*}
\left\| \hat{\rho}_n - \rho \right\|_1 \leq 4 r \tau + 2 \min \left\{ \sigma_r (\rho), \sigma_r \left( \hat{\rho}_n \right) \right\},
\end{equation*}
where $\sigma_r (\rho)$ is defined in Eq.~\eqref{eq:approximation_error}.
\end{proposition}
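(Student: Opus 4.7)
The proof will proceed in two conceptual stages. First, I would promote the operator-norm closeness of $\hat{L}_n$ to operator-norm closeness of the projected estimator $\hat{\rho}_n$. Second, I would translate this operator-norm bound into a trace-norm bound via a Jordan decomposition of $\Delta := \hat{\rho}_n - \rho$ combined with Weyl's perturbation inequality, which is what introduces the effective-rank scaling.

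For the first stage the claim is $\|\hat{L}_n - \hat{\rho}_n\|_\infty \leq \tau$, from which the triangle inequality yields $\|\hat{\rho}_n - \rho\|_\infty \leq 2\tau$. Since the two matrices are co-diagonalized by the explicit formula \eqref{eq:pli_analytic}, their difference is diagonal in the common basis with entries $\lambda_i - [\lambda_i - x_0]^+ = \min(\lambda_i, x_0)$. The whole bound hinges on the sandwich $0 \leq x_0 \leq \tau$. The upper bound follows from Weyl's inequality $\lambda_i(\hat{L}_n) \leq \lambda_i(\rho) + \tau$: substituting $x=\tau$ into the non-increasing map $x \mapsto \sum_i [\lambda_i(\hat{L}_n) - x]^+$ gives a value $\leq \sum_i \lambda_i(\rho) = 1$, so the root $x_0$ of this map cannot exceed $\tau$. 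The lower bound $x_0 \geq 0$ holds because at $x=0$ the map evaluates to $\sum_i \lambda_i(\hat{L}_n)^+ \geq \sum_i \lambda_i(\hat{L}_n) = \mathrm{tr}(\hat{L}_n) = 1$.

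For the second stage, the key observation is that $\Delta = \hat{\rho}_n - \rho$ is Hermitian with $\mathrm{tr}(\Delta) = 0$, so $\|\Delta\|_1 = 2\,\mathrm{tr}(\Delta_+)$. Ordering the eigenvalues $\mu_1 \geq \mu_2 \geq \cdots \geq \mu_d$ of $\Delta$ decreasingly, I would split
\begin{equation*}
\mathrm{tr}(\Delta_+) = \sum_{i=1}^{r} \mu_i^+ \;+\; \sum_{i>r} \mu_i^+ .
\end{equation*}
The first sum is controlled by $r\|\Delta\|_\infty \leq 2r\tau$. For the tail, apply Weyl's additive inequality to the decomposition $\Delta = \hat{\rho}_n + (-\rho)$: $\mu_i \leq \lambda_i(\hat{\rho}_n) + \lambda_1(-\rho) \leq \lambda_i(\hat{\rho}_n)$, and hence $\mu_i^+ \leq \lambda_i(\hat{\rho}_n)$ since $\hat{\rho}_n \geq 0$. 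So the tail is at most $\sum_{i>r} \lambda_i(\hat{\rho}_n) = \sigma_r(\hat{\rho}_n)$, producing $\|\Delta\|_1 \leq 4r\tau + 2\sigma_r(\hat{\rho}_n)$. Running the symmetric argument on $-\Delta = \rho - \hat{\rho}_n$ gives the analogous bound $\mathrm{tr}(\Delta_-) \leq 2r\tau + \sigma_r(\rho)$; the trace-zero identity $\mathrm{tr}(\Delta_+) = \mathrm{tr}(\Delta_-)$ then lets me take the minimum of the two residuals.

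The main obstacle I anticipate is the first stage: the shift $x_0$ is only defined implicitly by a trace constraint, and establishing the sandwich $0 \leq x_0 \leq \tau$ is load-bearing because any larger bound would inflate $\|\hat{\rho}_n - \rho\|_\infty$ and the constant in front of $r\tau$ would degrade correspondingly. The Weyl-inequality tail estimate is the other conceptually critical step, but it is essentially an eigenvalue-interlacing observation that slots in cleanly once the operator-norm bound is established.
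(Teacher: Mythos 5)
Your proposal is correct, and it follows the same two-stage strategy as the paper (control the threshold $x_0$, promote the operator-norm hypothesis to $\|\hat{\rho}_n-\rho\|_\infty\le 2\tau$, then convert to trace norm at the cost of a factor $2r$ plus the rank-$r$ residual), but you execute both stages by genuinely different arguments. For $x_0\in[0,\tau]$ the paper works with the projector onto the range of $\hat{\rho}_n$ and a semidefinite-order comparison (Lemma~\ref{lem:conversion_aux1}), whereas you read the bound off the monotone threshold function $x\mapsto\sum_i[\lambda_i(\hat{L}_n)-x]^+$ together with Weyl's inequality; both work, and yours is purely spectral (note only that this function is strictly decreasing wherever it is positive, which is what lets you conclude $x_0\le\tau$ from its value at $x=\tau$ being at most $1$). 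For the conversion step the paper proves a standalone lemma for arbitrary pairs of states via the best rank-$r$ truncation $\rho=\rho_r+\rho_c$, the positive/negative projectors of $\rho_r-\sigma$, and H\"older (Lemma~\ref{lem:conversion_aux2}), while you use the traceless Jordan decomposition $\|\Delta\|_1=2\,\mathrm{tr}(\Delta_+)=2\,\mathrm{tr}(\Delta_-)$ for $\Delta=\hat{\rho}_n-\rho$ and Weyl's inequality $\mu_i\le\lambda_i(\hat{\rho}_n)$ to bound the tail eigenvalues; this yields exactly the paper's bound $2r\|\Delta\|_\infty+2\min\left\{\sigma_r(\rho),\sigma_r(\hat{\rho}_n)\right\}$ and is arguably cleaner, since the minimum over the two residuals drops out automatically from $\mathrm{tr}(\Delta_+)=\mathrm{tr}(\Delta_-)$ rather than from a without-loss-of-generality choice. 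One small patch is needed in your first stage: the sandwich $0\le x_0\le\tau$ alone does not give $\|\hat{L}_n-\hat{\rho}_n\|_\infty\le\tau$, because the common-basis entries $\min(\lambda_i,x_0)$ can be as negative as $\lambda_{\min}(\hat{L}_n)$; you also need $\lambda_{\min}(\hat{L}_n)\ge-\tau$, which follows from the companion Weyl bound $\lambda_i(\hat{L}_n)\ge\lambda_i(\rho)-\tau$ and $\rho\succeq 0$ that you already have in hand, so the final constant $4r\tau$ is unaffected.
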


The statement readily follows from combining two auxiliary results. The first one states that the threshold value $x_0$ in the analytic solution of $\hat{\rho}_n$ must be small if $\hat{L}_n$ is operator-norm close to a quantum state.

\begin{lemma} \label{lem:conversion_aux1}
Instantiate the assumptions from Proposition~\ref{prop:trace_distance}. Then, the threshold value in Eq.~\eqref{eq:pli_analytic} obeys $x_0 \in \left[0, \tau \right]$.
\end{lemma}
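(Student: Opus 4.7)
The plan is to prove the two bounds $x_0 \geq 0$ and $x_0 \leq \tau$ separately. Both follow from analyzing the defining equation of $x_0$, namely the normalization condition $\sum_{i=1}^d [\lambda_i - x_0]^+ = 1$, where $\lambda_1,\ldots,\lambda_d$ are the eigenvalues of $\hat{L}_n$.

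For the lower bound $x_0 \geq 0$, I would use the pointwise inequality $[\lambda_i - x_0]^+ \geq \lambda_i - x_0$ (valid for any real $x_0$). Summing yields
\begin{equation*}
1 = \sum_{i=1}^d [\lambda_i - x_0]^+ \geq \sum_{i=1}^d (\lambda_i - x_0) = \mathrm{tr}(\hat{L}_n) - d x_0 = 1 - d x_0,
\end{equation*}
which forces $x_0 \geq 0$. This step only uses the unit-trace condition on $\hat{L}_n$ and not the closeness hypothesis.

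For the upper bound $x_0 \leq \tau$, the key input is Weyl's inequality for Hermitian matrices: if $\mu_1 \geq \cdots \geq \mu_d \geq 0$ are the (non-increasingly ordered) eigenvalues of the quantum state $\rho$ and $\lambda_1 \geq \cdots \geq \lambda_d$ those of $\hat{L}_n$, then $|\lambda_i - \mu_i| \leq \|\hat{L}_n - \rho\|_\infty \leq \tau$ for every $i$. Consequently $\lambda_i - \tau \leq \mu_i$, and since $\mu_i \geq 0$,
\begin{equation*}
[\lambda_i - \tau]^+ \leq \max\{\mu_i, 0\} = \mu_i.
\end{equation*}
Summing gives $\sum_{i=1}^d [\lambda_i - \tau]^+ \leq \mathrm{tr}(\rho) = 1$.

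Finally, I would combine this with a monotonicity argument. Introduce the auxiliary function $g(x) = \sum_{i=1}^d [\lambda_i - x]^+$, which is continuous, piecewise linear, and non-increasing, with $g'(x) = -|\{i : \lambda_i > x\}|$ wherever the derivative exists. By definition $g(x_0) = 1$, which forces at least one $\lambda_i > x_0$, so $g$ is \emph{strictly} decreasing in a neighborhood of $x_0$. Assume toward a contradiction that $x_0 > \tau$; then monotonicity gives $g(\tau) \geq g(x_0) = 1$, and combining with $g(\tau) \leq 1$ from the previous step yields $g(\tau) = 1 = g(x_0)$. But strict decrease on the interval $[\tau, x_0]$ (both endpoints lie in the strict-decrease region, since $\tau < x_0 < \max_i \lambda_i$) produces $g(\tau) > g(x_0)$, a contradiction. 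Hence $x_0 \leq \tau$.

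The main (modest) obstacle is justifying that the monotonicity of $g$ is strict at the relevant point; this is handled by noting $g(x_0) = 1 > 0$ implies some $\lambda_i$ strictly exceeds $x_0$, which is exactly the condition needed for strict decrease. Everything else is an algebraic manipulation, with Weyl's inequality supplying the only nontrivial ingredient that actually couples $\hat{L}_n$ to the quantum state $\rho$.
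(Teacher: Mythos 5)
Your proof is correct, but it takes a genuinely different route from the paper's. You argue entirely at the level of eigenvalues: the lower bound follows from the pointwise inequality $[\lambda_i - x_0]^+ \geq \lambda_i - x_0$ together with $\mathrm{tr}(\hat{L}_n)=1$, and the upper bound from Weyl's perturbation theorem $|\lambda_i - \mu_i| \leq \|\hat{L}_n - \rho\|_\infty \leq \tau$ combined with a strict-monotonicity analysis of the waterfilling function $g(x) = \sum_i [\lambda_i - x]^+$. The paper instead works in the positive semidefinite order: with $P$ the projector onto the range of $\hat{\rho}_n$, it writes $\hat{\rho}_n = P\hat{L}_n P - x_0 P \leq (\tau - x_0)P + P\rho P$ and takes traces, using $\mathrm{tr}(P\rho P) \leq 1 = \mathrm{tr}(\hat{\rho}_n)$ to conclude $x_0 \leq \tau$, while the lower bound is handled by a short case distinction (PSD versus indefinite $\hat{L}_n$). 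Your version buys a cleaner, case-free lower bound and needs nothing beyond the thresholding formula for $\hat{\rho}_n$, at the cost of invoking eigenvalue perturbation theory and the bookkeeping for $g$ — which you do handle correctly: $g(x_0)=1>0$ forces some $\lambda_i > x_0$, which rules out a flat stretch of $g$ on $[\tau, x_0]$ and closes the contradiction. The paper's argument avoids Weyl's inequality altogether and is marginally shorter, but both proofs are elementary and equally valid.
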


\begin{proof}
By assumption $\hat{L}_n$ has unit trace. If it is in addition psd, $\hat{\rho}_n = \hat{L}_n$, because $\hat{L}_n$ is already a quantum state and the projection is trivial ($x_0=0$)
Otherwise, $\hat{L}_n$ is indefinite and unit trace ensures that the positive part dominates. Hence, $x_0$ must be strictly positive to enforce $\mathrm{tr}(\hat{\rho}_n)=1$.

For the upper bound, let $P \in \mathbb{H}_d$ denote the orthogonal projection onto the range of $\hat{\rho}_n$. Then, $\hat{\rho}_n = P \left( \hat{L}_n- x_0 \mathbb{I} \right) P = P \hat{L}_n P - x_0 P$, according to Eq.~\eqref{eq:pli_analytic}. 
In semidefinite order, this implies
\begin{align*}
\hat{\rho}_n=& P (\hat{L}_n-\rho )P - x_0 P + P \rho P \\
\leq & \left( \| \hat{L}_n - \rho \|_\infty - x_0 \right) P + P \rho P \\
\leq &  \left( \tau - x_0 \right) P + P \rho P,
\end{align*}
where the last line follows from the assumption $\| \hat{L}_n - \rho \|_\infty \leq \tau$. The trace preserves semidefinite order and we conclude
\begin{align*}
0 \leq \mathrm{tr} \left( \hat{\rho}_n\right) - \mathrm{tr} \left( P \rho P \right)
\leq \left(\tau - x_0 \right) \mathrm{tr} \left( P \right) 
\end{align*}
which implies an upper bound of $\tau$ ($\mathrm{tr}(P)>0$).
\end{proof}

The second technical lemma generalizes a result that is somewhat folklore in quantum information theory: the ``effective rank'' of a difference of two quantum states is proportional to the minimal rank of the two density operators involved.

\begin{lemma} \label{lem:conversion_aux2}
Fix $r \in \mathbb{N}$ and let $\rho, \sigma \in \mathbb{H}_d$ be quantum states. Then,
\begin{equation*}
\| \rho - \sigma \|_1 \leq 2 r \| \rho - \sigma \|_\infty + 2 \min \left\{ \sigma_r (\rho), \sigma_r (\sigma)\right\},
\end{equation*}
where the residual error $\sigma_r (\cdot)$ was defined in Eq.~\eqref{eq:approximation_error}.
\end{lemma}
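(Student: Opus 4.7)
The plan is a triangle-inequality reduction followed by two elementary operator inequalities. By the symmetry of the claim, I would assume without loss of generality that $\sigma_r(\sigma) \leq \sigma_r(\rho)$, so that the minimum equals $\sigma_r(\sigma)$. Let $\tilde{\sigma}$ denote the best rank-$r$ approximation of $\sigma$ (the truncation to the top $r$ eigenvectors), and write $\sigma = \tilde\sigma + \sigma_t$ with $\tilde\sigma, \sigma_t \geq 0$ and $\mathrm{tr}(\sigma_t) = \sigma_r(\sigma)$. The triangle inequality yields $\|\rho - \sigma\|_1 \leq \|\rho - \tilde\sigma\|_1 + \sigma_r(\sigma)$, so the task reduces to proving
\begin{equation*}
\|\rho - \tilde\sigma\|_1 \leq 2r\|\rho - \sigma\|_\infty + \sigma_r(\sigma).
\end{equation*}

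Next, I set $A = \rho - \tilde\sigma$ and split it into its positive and negative parts $A = A_+ - A_-$, with $A_\pm \geq 0$ of disjoint support. The argument will hinge on two operator inequalities for $A$. First, since $\rho \geq 0$, one has $A \geq -\tilde\sigma$, so $0 \leq A_- \leq \tilde\sigma$; because $\tilde\sigma$ has rank at most $r$, the range of $A_-$ is contained in that of $\tilde\sigma$, forcing $\mathrm{rank}(A_-) \leq r$ and hence $\mathrm{tr}(A_-) \leq r \|A_-\|_\infty$. Second, since $\tilde\sigma \leq \sigma$, one has $A = \rho - \tilde\sigma \geq \rho - \sigma$, so the smallest eigenvalue of $A$ is bounded below by the smallest eigenvalue of $\rho - \sigma$, which gives $\|A_-\|_\infty \leq \|\rho - \sigma\|_\infty$. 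Combining the two bounds yields $\mathrm{tr}(A_-) \leq r \|\rho - \sigma\|_\infty$.

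To finish, I use $\mathrm{tr}(A) = 1 - \mathrm{tr}(\tilde\sigma) = \sigma_r(\sigma)$ together with $\mathrm{tr}(A) = \mathrm{tr}(A_+) - \mathrm{tr}(A_-)$ to conclude $\mathrm{tr}(A_+) = \mathrm{tr}(A_-) + \sigma_r(\sigma)$. Therefore
\begin{equation*}
\|\rho - \tilde\sigma\|_1 = \mathrm{tr}(A_+) + \mathrm{tr}(A_-) = 2\mathrm{tr}(A_-) + \sigma_r(\sigma) \leq 2r\|\rho - \sigma\|_\infty + \sigma_r(\sigma),
\end{equation*}
and plugging into the triangle-inequality bound completes the argument. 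I do not foresee a substantive obstacle: the only nontrivial ingredient is the rank bound on $A_-$, which rests on the elementary fact that $0 \leq X \leq Y$ implies $\mathrm{range}(X) \subseteq \mathrm{range}(Y)$; everything else is bookkeeping with traces and semidefinite inequalities.
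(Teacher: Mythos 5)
Your strategy is in essence the paper's own proof with the roles of the two states swapped: truncate the state with the smaller residual, apply the triangle inequality, and control the truncated difference through a rank-$r$ bound on one of its eigenprojectors plus the trace bookkeeping $\mathrm{tr}(A_+)=\mathrm{tr}(A_-)+\sigma_r(\sigma)$. The accounting at the end is correct. However, one justification as written is false: from $\rho\ge 0$, i.e.\ $A=\rho-\tilde\sigma\ge-\tilde\sigma$, you cannot conclude the operator inequality $A_-\le\tilde\sigma$, nor the range inclusion $\mathrm{range}(A_-)\subseteq\mathrm{range}(\tilde\sigma)$. (The fact you cite, that $0\le X\le Y$ forces $\mathrm{range}(X)\subseteq\mathrm{range}(Y)$, is fine; it is its hypothesis that fails.) A counterexample within the hypotheses of the lemma: take $d=2$, $r=1$, $\rho=|0\rangle\!\langle 0|$ and $\sigma=\tfrac{9}{10}|+\rangle\!\langle+|+\tfrac{1}{10}|-\rangle\!\langle-|$, so $\tilde\sigma=\tfrac{9}{10}|+\rangle\!\langle+|$. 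Then $A=\rho-\tilde\sigma$ has a one-dimensional negative eigenspace whose eigenvector is not proportional to $|+\rangle$, so $\mathrm{range}(A_-)\not\subseteq\mathrm{range}(\tilde\sigma)$ and a fortiori $A_-\not\le\tilde\sigma$.

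Fortunately the only consequence you actually use, $\mathrm{rank}(A_-)\le r$, is true and needs only a dimension count: if $x$ lies in the negative eigenspace of $A$, then $\langle x|A|x\rangle<0$ together with $\langle x|A+\tilde\sigma|x\rangle\ge 0$ forces $\langle x|\tilde\sigma|x\rangle>0$, so the negative eigenspace of $A$ meets $\ker\tilde\sigma$ (of dimension at least $d-r$) only in $\{0\}$, and hence has dimension at most $r$. This is also precisely the fact the paper invokes when it asserts ``by construction'' that its projector $P_+$ onto the positive range of $\rho_r-\sigma$ has rank at most $\mathrm{rank}(\rho_r)=r$. With that repair, your remaining steps --- $\|A_-\|_\infty\le\|\rho-\sigma\|_\infty$ from $\tilde\sigma\le\sigma$, the identity $\mathrm{tr}(A)=\sigma_r(\sigma)$, and the triangle inequality --- are correct and reproduce the paper's bound; the two proofs then essentially coincide.
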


\begin{proof}
We can without loss of generality assume $\sigma_r (\rho) \leq \sigma_r (\sigma)$. 
Decompose $\rho$ into $\rho_r + \rho_c$, where $\rho_r$ is the best rank-$r$ approximation \eqref{eq:approximation_error} and $\rho_c = \rho-\rho_r$ denotes the ``tail''. By construction, both $\rho_r$ and $\rho_c$ are positive semidefinite matrices that obey $\sigma_r (\rho) = \mathrm{tr}(\rho_c) = 1- \mathrm{tr}(\rho_r)$. 
The triangle inequality then implies
\begin{equation*}
\| \rho - \sigma \|_1 \leq \| \rho_r - \sigma \|_1 + \sigma_r (\rho),
\end{equation*}
because $\| \rho_c \|_1 = \sigma_r (\rho)$. 
Next, let $P_+,P_- \in \mathbb{H}_d$ be the projections onto the positive and non-positive ranges of $\rho_r - \sigma$. 
By construction,  $P_+$ has rank at most $\mathrm{rank}(\rho_r)=r$ and the trace norm equals
\begin{align*}
\| \rho_r - \sigma \|_1 = \mathrm{tr} \left(P_+ (\rho_r-\sigma) \right) - \mathrm{tr} \left( P_- (\rho_r-\sigma) \right).
\end{align*}
On the other hand,
\begin{equation*}
\sigma_r (\rho) = \mathrm{tr}(\sigma-\rho_r) = -\mathrm{tr} \left( P_+ (\rho_r - \sigma) \right) - \mathrm{tr} \left( P_- (\rho_r-\sigma) \right),
\end{equation*}
because $P_+ + P_- = \mathbb{I}$.
Combining both relations yields
\begin{align*}
\| \rho_r - \sigma \|_1 =& 2 \mathrm{tr}\left(P_+ (\rho_r-\sigma) \right) + \sigma_r (\rho) \\ 
\leq & 2 \mathrm{tr} \left( P_+ (\rho-\sigma) \right) + \sigma_r (\rho) \\
\leq & 2 \| P_+ \|_\infty \| \rho - \sigma \|_1 + \sigma_r (\rho),
\end{align*}
where we have used $\mathrm{tr}(P_+ \rho_c) \geq 0$ and Hoelder's inequality. Finally, note that $\| P_+ \|_1 = \mathrm{rank}(P_+)=r$ by construction and the claim follows.
\end{proof}

The main result of this section is a rather straightforward combination of these two technical statements.

\begin{proof}[Proof of Proposition~\ref{prop:trace_distance}]
Fix $r \in \mathbb{N}$ and use, Lemma~\ref{lem:conversion_aux2} to conclude
\begin{align*}
\| \hat{\rho}_n - \rho \|_1 \leq 2 r \| \hat{\rho}_n - \rho \|_\infty +2 \min \left\{ \sigma_r (\rho), \sigma_r \left( \hat{\rho}_n \right) \right\}.
\end{align*}
Next, note that according to \eqref{eq:pli_analytic}, $\hat{\rho}_n$ may be viewed as the positive definite part of the matrix $\hat{L}_n - x_0 \mathbb{I}$. Such a restriction to the positive part can never increase the operator norm distance to another positive semidefinite matrix. Hence,
\begin{align*}
\| \hat{\rho}_n- \rho \|_\infty
\leq \| \hat{L}_n - \rho \|_\infty + |x_0| \| \mathbb{I} \|_\infty 
\leq 2 \tau,
\end{align*}
where the last inequality follows from Lemma~\ref{lem:conversion_aux1}.
\end{proof}

\section{Proof of the main result}
\label{sec:effective_rank}

By now we have everything in place to provide a complete proof of the main result of this work.

\begin{theorem} \label{thm:main_appendix}
Let $\rho \in \mathbb{N}$ be a state. Suppose that we either perform $n$ structured POVM measurements (set $g(d) = 2d$), $n$ Pauli observable measurements (set $g(d) = d^2$), or  $n$ Pauli basis measurements (set $g(d) =d^{1.6}$).
Then, for any $r \in \mathbb{N}$ and $\epsilon \in \left[0,1 \right]$,the PLS estimator $\hat{\rho}_n$ \eqref{eq:pli_appendix} obeys
\begin{align*}
\mathrm{Pr} \left[ \left\| \hat{\rho}_n - \rho \right\|_1 \geq \epsilon + 2 \min \left\{\sigma_r (\rho), \sigma_r (\hat{\rho}_n) \right\} \right] \leq d \mathrm{e}^{-\frac{n \epsilon^2}{43 g(d)r^2}},
\end{align*}
where $\sigma_r (\rho),\sigma_r (\hat{\rho}_n)$ denote the residual error of approximating $\rho$ and $\hat{\rho}_n$ by a rank-$r$ matrix \eqref{eq:approximation_error}.
\end{theorem}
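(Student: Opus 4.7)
The proof is essentially a synthesis of the two main technical ingredients already established earlier in the appendix: the matrix Bernstein concentration estimates on $\|\hat{L}_n - \rho\|_\infty$ from Section~\ref{sec:concentration}, and the operator-norm-to-trace-norm conversion with an effective-rank factor supplied by Proposition~\ref{prop:trace_distance}. The plan is to tune the operator-norm threshold so that the two bounds chain together into the advertised tail estimate.

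Concretely, I would fix $r \in \mathbb{N}$ and $\epsilon \in [0,1]$ and set $\tau = \epsilon/(4r)$. Proposition~\ref{prop:trace_distance} then asserts that whenever $\|\hat{L}_n - \rho\|_\infty \leq \tau$ one has
\begin{equation*}
\|\hat{\rho}_n - \rho\|_1 \;\leq\; 4r\tau + 2\min\{\sigma_r(\rho),\sigma_r(\hat{\rho}_n)\} \;=\; \epsilon + 2\min\{\sigma_r(\rho),\sigma_r(\hat{\rho}_n)\}.
\end{equation*}
Taking contrapositives shows that the event whose probability we wish to bound is contained in $\{\|\hat{L}_n - \rho\|_\infty \geq \tau\}$. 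For each of the three measurement types, Section~\ref{sec:concentration} yields a matrix Bernstein estimate of the form $d\exp(-3n\tau^2/(8 g(d)))$ in the sub-Gaussian regime, with $g(d) = 2d$, $d^2$, and $3^k$ respectively. Substituting $\tau = \epsilon/(4r)$ turns this into $d\exp\!\bigl(-3n\epsilon^2/(128\,g(d)\,r^2)\bigr)$, which is dominated by $d\exp\!\bigl(-n\epsilon^2/(43\,g(d)\,r^2)\bigr)$ since $3/128 > 1/43$.

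Two small bookkeeping points require care. First, one must check that $\tau = \epsilon/(4r) \leq 1/4$ indeed falls in the sub-Gaussian regime of the matrix Bernstein inequality for each measurement; this follows because the transition thresholds $\sigma^2/R$ computed in Section~\ref{sec:concentration} are at least $2$ (structured POVMs), $d/2$ (Pauli observables), and a comparable constant in the Pauli-basis case. Second, one must verify that the value $g(d) = d^{1.6}$ claimed in the theorem is at least as large as the exact value $3^k$ produced by the Pauli-basis analysis, which is immediate from $\log_2 3 \approx 1.585 < 1.6$. I would expect \emph{no real obstacle} here, since the heavy lifting has been done upstream: recognizing $\hat{L}_n$ as a sum of independent random matrices and establishing the rank-dependent norm conversion. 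Everything else is careful constant tracking.
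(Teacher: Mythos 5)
Your proposal is correct and follows essentially the same route as the paper's own proof: fix $\tau = \epsilon/(4r)$, invoke the operator-norm concentration bounds of Section~\ref{sec:concentration} for each measurement ensemble, and convert via Proposition~\ref{prop:trace_distance}, with the constant $43$ absorbing the factor $128/3$. Your additional checks (sub-Gaussian regime validity and $3^k \leq d^{1.6}$) are exactly the bookkeeping the paper leaves implicit.
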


Note that Theorem~\ref{thm:main_result} is an immediate consequence of this more general result: simply set $r = \min \left\{ \mathrm{rank}(\rho), \mathrm{rank}(\hat{\rho}_n) \right\}$ which in turn ensures $\min \left\{\sigma_r (\rho), \sigma_r (\hat{\rho}_n)\right\} = 0$.

However, unlike this specification, Theorem~\ref{thm:main_appendix} does feature an additional degree of freedom. The parameter $r \in \mathbb{N}$ allows for interpolating between small values (small sampling rate, but a potentially large reconstruction error) and large values (high sampling rate, but low reconstruction error).  
This tradeoff is particulary benign for quantum states that are approximately low-rank.
Due to experimental imperfections, such states arise naturally in many experiments that aim at generating a  pure quantum state.
We illustrate this by means of the following caricature of a faulty state preparation protocol.
Suppose that an apparatus either produces a target state $| \psi \rangle \! \langle \psi|$ perfectly, or fails completely,
 in the sense that it outputs a maximally mixed state. Then, the resulting state is
\begin{equation*}
\rho = (1-p)| \psi \rangle \! \langle \psi| + \frac{p}{d}\mathbb{I},
\end{equation*} 
where $p \in \left[0,1 \right]$ denotes the probability of failure.
This state has clearly full rank and Corollary~\ref{cor:error_bars} requires at least $n \geq 43 \frac{g(d)d^2}{\epsilon^2} \log (d/\delta)$ samples to estimate it up to trace-norm accuracy $\epsilon$ with high probability. 
In contrast, Theorem~\ref{thm:main_appendix} ensures that already
$
n \geq 43 \frac{g(d)}{\epsilon^2} \log (d/\delta)
$
samples suffice to ensure that, with high probability, the PLS estimator obeys
$
\| \hat{\rho}_n - \rho \|_1 \leq \epsilon+2p
$.
For sufficiently high success probabilities/low accuracy ($\epsilon \geq 2p/d$) this clearly outperforms the original statement.

\begin{proof}[Proof of Theorem~\ref{thm:main_appendix}]

We illustrate the proof for structured POVMs -- the other settings are completely analogous. 
Fix $\epsilon \in \left[0,1 \right]$, $r \in \mathbb{N}$ and set $\tau = \frac{\epsilon}{4 r}$. Then, the main result of Sec.~\ref{sub:concentration_structured} -- Equation~\eqref{eq:closeness_2design} -- ensures that the least squares estimator $\hat{L}_n$ obeys
\begin{equation}
\left\| \hat{L}_n - \rho \right\|_\infty \leq \tau = \frac{\epsilon}{4 r}
\label{eq:main_appendix_aux1}
\end{equation}
with probability of failure bounded by
$
d \mathrm{e}^{-\frac{\epsilon^2 n}{86 dr^2}}
$. Assuming that this condition is true, Proposition~\ref{prop:trace_distance} readily yields
$
\| \hat{\rho}_n - \rho \|_1 \leq \epsilon + 2 \min \left\{ \sigma_r (\rho), \sigma_r (\hat{\rho}_n) \right\}
$.
\end{proof}

\section{Improved convergence guarantees for the uniform POVM}
\label{sec:uniform}

All the convergence results derived so far feature an additional $\log(d)$-factor. This is a consequence of the matrix Bernstein inequality that proved instrumental in deriving these results. One can show that such an additional factor necessarily features in \emph{all} matrix concentration inequalities that are based on exclusively first and second moments of the random matrices in question \cite{tropp_second-order_2018}.

However, the following question remains: is this $\log (d)$-factor in Theorem~\ref{thm:main_result} an artifact of the proof technique, or is it an intrinsic feature of tomography via projected least squares?

In this section we rule out the second possibility: a different proof technique allows for avoiding this $\log (d)$-factor, provided that the POVM is sufficiently symmetric and well-behaved. More precisely, we re-visit the uniform POVM $\left\{ d |v \rangle \! \langle v| \mathrm{d}v \right\}_{v \in\mathbb{S}^d}$ and exploit the fact that Eq.~\eqref{eq:frame_operator} completely characterizes \emph{all} moments of the resulting outcome distribution. 
This opens the door for applying very strong proof techniques from large dimensional probability theory that found wide-spread applications in a variety of subjects, including compressed sensing \cite{foucart_mathematical_2013} and, more recently, quantum information theory \cite{lancien_approximating_2017}. 
We believe that this technique may be of independent interest and find it therefore worthwhile to present it in a self-contained fashion. 
Roughly speaking, it is based on the following steps:
\begin{itemize}
\item[(o)] \emph{Reformulation:} the operator norm of a random hermitian matrix $A \in \mathbb{H}_d$ admits a variational definition: 
\begin{equation}
\| A \|_\infty = \max_{y \in \mathbb{S}^d} \left| \langle y| A |y \rangle \right|.
\label{eq:variational_operator_norm}
\end{equation} 
\item[(i)] \emph{Discretization:} replace the maximization over the entire complex unit sphere $\mathbb{S}^d$ by a maximization over a finite point set $\mathcal{N}$ that covers $\mathbb{S}^d$  to sufficiently high accuracy (\emph{covering net}).
\item[(ii)] \emph{Concentration:} fix $y \in \mathcal{N}$ and show that the scalar random variable $s_y=\langle y| A |y \rangle$ concentrates sharply around its expectation value.
\item[(iii)] \emph{Union bound:} apply a union bound over all $|\mathcal{N}|$ random variables $s_y$ to obtain an upper bound on the operator norm $\| A \|_\infty$.
\end{itemize}
Ideally the tail bound from (iii) is sharp enough to ''counter-balance'' the $|\mathcal{N}|$-pre-factor that results from the union bound in step (iv). Should this be not the case, more sophisticated methods, like generic chaining \cite{talagrand_generic_2006}, may still allow for drawing non-trivial conclusions. Fortunately, for the task at hand, this turns out to not be necessary and the rather naive strategy sketched above suffices to achieve a result that is (provably) optimal up to a constant factors:

\begin{theorem} \label{thm:uniform}
Suppose that we perform $n$ independent uniform POVM measurements on a quantum state $\rho \in \mathbb{H}_d$. Then, the associated least squares estimator $\hat{L}_n$ obeys
\begin{equation*}
\mathrm{Pr} \left[ \left\| \hat{L}_n - \rho \right\|_\infty \geq \tau \right]
\leq 2 \exp \left( c_1 d - c_2 n\tau^2 \right),
\end{equation*}
In particular, $n \geq C \frac{d}{\tau^2} \log (1/\delta)$ suffices to ensure $\| \hat{L}_n - \rho \|_\infty \leq \tau$ with probability at least $1-\delta$.
Here $c_1,c_2,C >0$ denote constants of sufficient size.
\end{theorem}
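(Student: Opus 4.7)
The plan is to execute the four-step strategy (o)--(iii) sketched above. First, I would rewrite $\hat L_n - \rho = \tfrac{1}{n} \sum_{i=1}^n (X_i - \mathbb{E} X_i)$ with $X_i = (d+1) |v_i \rangle\!\langle v_i| - \mathbb{I}$ and $v_i$ drawn independently from the Born density $d\langle v|\rho|v\rangle\,\mathrm{d} v$ on $\mathbb{S}^d$. Via the variational identity $\|A\|_\infty = \sup_{y \in \mathbb{S}^d} |\langle y|A|y\rangle|$ and a standard covering-net argument, it suffices to control $|\langle y |(\hat L_n - \rho) |y\rangle|$ simultaneously for every $y$ in a $\tfrac{1}{4}$-net $\mathcal{N} \subset \mathbb{S}^d$; such a net exists with $|\mathcal{N}| \leq 9^{2d}$ and yields $\|A\|_\infty \leq 2 \sup_{y \in \mathcal{N}} |\langle y|A|y\rangle|$.

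The crux is the third step. For fixed $y$, the scalar $\langle y|(\hat L_n -\rho)|y\rangle = \tfrac{1}{n} \sum_i W_i$ with $W_i = (d+1) |\langle y | v_i\rangle|^2 - 1 -\langle y|\rho|y\rangle$ is i.i.d.\ and centered, and I need its subexponential $\psi_1$-norm to be bounded by an absolute constant \emph{independent} of $d$. To this end I would compute the marginal density of $s := |\langle y|v_i\rangle|^2$ under the Born law explicitly. Writing $v_i = \sqrt{s}\, y + \sqrt{1-s}\, w$ with $w$ Haar-uniform on the unit sphere of $y^\perp$, the cross terms in $\langle v_i|\rho|v_i\rangle$ average to zero and $\mathbb{E}_w[\langle w|\rho|w\rangle] = (1-t)/(d-1)$ with $t := \langle y|\rho|y\rangle$, leaving
\begin{equation*}
p(s) = d(d-1)\, t \, s\, (1-s)^{d-2} + d(1-t)(1-s)^{d-1}, \qquad s \in [0,1].
\end{equation*}
A Beta-integral computation then yields the closed form $\mathbb{E}[s^k] = k!\, d!\,(1+kt)/(k+d)!$, and the telescoping bound $\prod_{j=1}^k (d+1)/(d+j) \leq 1$ gives
\begin{equation*}
\mathbb{E}\bigl[((d+1)s)^k\bigr] \leq k!\,(1+kt) \leq (k+1)!
\end{equation*}
uniformly in $d$. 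This dimension-free moment bound immediately implies $\|(d+1)s\|_{\psi_1} = O(1)$ and hence $\|W_i\|_{\psi_1} = O(1)$.

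With this in hand, Bernstein's inequality for sums of i.i.d.\ centered subexponential variables yields $\mathrm{Pr}[|\tfrac{1}{n} \sum_i W_i| \geq \tau] \leq 2\exp(-c n \tau^2)$ for an absolute constant $c > 0$ and every $\tau \in [0,1]$. A union bound over the $9^{2d}$-point net then delivers the theorem with, e.g., $c_1 = 2\ln 9$ and $c_2 = c/4$. The associated sample complexity $n \gtrsim d/\tau^2$ is read off by requiring $c_1 d - c_2 n \tau^2 \leq \log \delta$, and the matching trace-norm statement of Theorem~\ref{thm:uniform_main} follows by feeding this operator-norm bound into Proposition~\ref{prop:trace_distance}.

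The main obstacle is exactly the dimension-free moment bound $\mathbb{E}[((d+1)s)^k] \leq (k+1)!$. The crude estimate $\langle v|\rho|v\rangle \leq \|\rho\|_\infty \leq 1$ inside the Born density only gives $\mathbb{E}[s^k] \leq d\,\binom{d+k-1}{k}^{-1}$, which translates to $\|(d+1)s\|_{\psi_1} = O(d)$; together with the $9^{2d}$-net this forces the suboptimal sample complexity $n \gtrsim d^2/\tau^2$ and reproduces (at best) the matrix-Bernstein rate. Exploiting the full Born structure via the conditional decomposition sketched above is what cancels this spurious dimensional factor and recovers the optimal $n \gtrsim d/\tau^2$ rate --- it is the one non-routine input in what is otherwise a textbook net-plus-Bernstein argument.
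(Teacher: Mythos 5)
Your proposal is correct, and it reaches the theorem by a genuinely different route at the one non-routine step. The paper also reduces to the scheme net $\to$ scalar Bernstein $\to$ union bound (with the same $\theta=\tfrac14$ net and essentially the same constants up to the value of $c_1$), but it controls the moments of $s_y=\langle y|X-\mathbb{E}X|y\rangle$ representation-theoretically: writing $s_y=(d+1)\langle v|B|v\rangle$ with $B=|y\rangle\!\langle y|-\tfrac{1+\langle y|\rho|y\rangle}{d+1}\mathbb{I}$, $\|B\|_1\le 3$, it evaluates $\mathbb{E}\bigl[\mathrm{tr}(|v\rangle\!\langle v|\,|B|)^p\bigr]$ via the order-$(p+1)$ frame operator $\binom{d+p}{p+1}^{-1}P_{\mathrm{Sym}^{(p+1)}}$ and H\"older, obtaining $\mathbb{E}|s_y|^p\le 27\times 6^{p-2}p!$. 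You instead compute the exact Born-law marginal of $s=|\langle y|v\rangle|^2$ (a reweighted $\mathrm{Beta}(1,d-1)$), and your density $p(s)=d(d-1)ts(1-s)^{d-2}+d(1-t)(1-s)^{d-1}$, the Beta-integral identity $\mathbb{E}[s^k]=k!\,d!\,(1+kt)/(k+d)!$, and the telescoping bound $\mathbb{E}[((d+1)s)^k]\le (k+1)!$ are all correct (note $(d+1)\mathbb{E}[s]=1+t$, confirming centering). Your route is more elementary (no Schur--Weyl/symmetric-subspace machinery) and yields slightly sharper moment constants; the paper's frame-operator argument trades this for a form that only invokes the closed-form moments of the measurement ensemble plus a trace-norm bound, which is the structural ingredient it highlights as reusable. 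Both crucially exploit the full symmetry of the uniform POVM, and both feed the resulting dimension-free subexponential tail into the same union bound; your $c_1=2\ln 9$ versus the paper's $2\ln 3$ is immaterial since the theorem leaves the constants unspecified. The final reduction to the trace-norm statement via Proposition~\ref{prop:trace_distance} is likewise the same.
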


No effort has been made to optimize the constants. The proof presented here yields $c_1 = 2 \log (3)$ and $c_2 = \frac{1}{480}$
which could be further improved by a more careful analysis.
Importantly, the second part of this statement can be combined with Proposition~\ref{prop:trace_distance} to readily deduce the last technical result of the main text:

\begin{corollary}[Re-statement of Theorem~\ref{thm:uniform}]
For any rank-$r$ state $\rho$, a number of $n \geq C \frac{r^2d}{\epsilon^2}\log (1/\delta)$ uniform POVM measurements suffice to ensure $\| \hat{\rho}^\sharp_{(n)} - \rho \|_1 \leq \epsilon$ with probability at least $1-\delta$.
\end{corollary}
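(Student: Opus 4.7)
The plan is to derive this corollary as an immediate combination of the two main ingredients already established in this section: the operator-norm concentration bound for the least-squares estimator (Theorem~\ref{thm:uniform}) and the operator-to-trace conversion (Proposition~\ref{prop:trace_distance}). The structure is a classical "tune the scale, then convert" argument.

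First I would invoke the second assertion of Theorem~\ref{thm:uniform}: for any target resolution $\tau>0$, a sample size $n \geq C d \tau^{-2} \log(1/\delta)$ ensures
\begin{equation*}
\bigl\| \hat L_n - \rho \bigr\|_\infty \;\leq\; \tau
\end{equation*}
with probability at least $1-\delta$. On this event, Proposition~\ref{prop:trace_distance} (applied with the rank parameter chosen as the actual rank $r$ of $\rho$) gives
\begin{equation*}
\bigl\| \hat\rho_n - \rho \bigr\|_1 \;\leq\; 4 r \tau + 2 \min\bigl\{\sigma_r(\rho),\sigma_r(\hat\rho_n)\bigr\}.
\end{equation*}
Because $\rho$ has rank at most $r$, the residual $\sigma_r(\rho)$ vanishes and the second term drops out entirely.

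The remaining step is bookkeeping on the scales. Choosing $\tau = \epsilon/(4r)$ converts the trace-norm bound into $\|\hat\rho_n - \rho\|_1 \leq \epsilon$, while the sample-size requirement becomes $n \geq 16 C \, r^2 d \, \epsilon^{-2} \log(1/\delta)$; absorbing the numerical factor into a redefined constant $C$ yields precisely the stated threshold. The probability statement $1-\delta$ passes through unchanged.

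There is no genuine obstacle here, since the heavy lifting has already been done: the discretization/covering-net argument is encapsulated in Theorem~\ref{thm:uniform}, and the fact that an operator-norm ball of radius $\tau$ translates into a trace-norm ball of radius $\mathcal{O}(r\tau)$ around the projected estimator is encapsulated in Proposition~\ref{prop:trace_distance}. The only minor subtlety worth verifying is that Proposition~\ref{prop:trace_distance} applies verbatim to $\hat L_n$ produced by the uniform POVM — but this holds because $\hat L_n$ is Hermitian with unit trace (cf.\ Eq.~\eqref{eq:LI_2design_appendix}), which is exactly the hypothesis of that proposition.
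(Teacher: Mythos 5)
Your proposal is correct and follows exactly the route the paper takes: it combines the second assertion of Theorem~\ref{thm:uniform} (operator-norm closeness of $\hat{L}_n$ at scale $\tau$) with Proposition~\ref{prop:trace_distance}, choosing $\tau = \epsilon/(4r)$ so that the residual term vanishes for a rank-$r$ state and the constant is absorbed. The bookkeeping and the unit-trace check for the uniform-POVM least-squares estimator are both accurate, so nothing is missing.
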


Not only does this statement reproduce the best known sampling rates for tomography with independent measurements \cite{kueng_low_2017}, it also exactly matches lower bounds on the minimal sample complexity associated with \emph{any} tomographic procedure that may apply in this setting \cite[Table~I]{haah_sample_2017}.

The remainder of this section is dedicated to proving Theorem~\ref{thm:uniform}. 
For the sake of accessibility, we will divide this proof into three subsections that contain the steps summarized above.

\subsection{Step I: Reformulation and discretization}

Suppose that we perform $n$ uniform POVM measurements on a fixed quantum state $\rho \in \mathbb{H}_d$. Then, the least squares estimator is equivalent to a sum of i.i.d. random matrices:
\begin{equation*}
\hat{L}_n = \frac{1}{n} \sum_{i=1}^n X_i.
\end{equation*}
Each $X_i$ is an independent copy of the random matrix $X$ that assumes the value $(d+1) |v \rangle \! \langle v| - \mathbb{I}$ with probability $d \langle v| \rho |v \rangle \mathrm{d}v$ and $v$ may range over the entire complex unit sphere.
Unbiasedness of this estimator in turn implies
\begin{align*}
\| \hat{L}_n - \rho \|_\infty
=& \left\| \frac{1}{n} \sum_{i=1}^n \left( X_i - \mathbb{E} \left[ X_i \right] \right) \right\|_\infty \\
=& \max_{y \in \mathbb{S}^d} \left| \langle y| \frac{1}{n} \sum_{i=1}^n \left( X_i - \mathbb{E} \left[ X_i \right] \right) |y \rangle \right|.
\end{align*}
Next, we employ a result that is somewhat folklore in random matrix theory, see e.g.\ \cite[Lemma~5.3]{vershynin_introduction_2012}. It states that the maximum over the entire unit sphere may be replaced by a maximum over certain finite point sets, called \emph{covering nets}: A covering-net of $\mathbb{S}^d$ with fineness $\theta >0$ is a finite set of unit vectors $\left\{ z_j \right\}_{j=1}^N \subseteq \mathbb{S}^d$ that covers the entire (complex) unit sphere in the sense that every $y \in \mathbb{S}^d$ is at least $\theta$-close to a point in the net. 

\begin{lemma}
Let $\mathcal{N}_\theta = \left\{z_j \right\}_{j=1}^N$ be a covering net of $\mathbb{S}^d$ with fineness $\theta$. Then, for any matrix $A \in \mathbb{H}_d$:
\begin{equation*}
\max_{j \in \left[N\right]} \left| \langle z_j| A |z_j \rangle \right| \leq \| A \|_\infty \leq \frac{1}{1-2 \theta} \max_{j \in \left[N \right]} \left| \langle z_j | A |z_j \rangle \right|.
\end{equation*}
\end{lemma}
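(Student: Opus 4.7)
The lower bound is immediate: each $z_j$ lies in $\mathbb{S}^d$, and the variational formula $\|A\|_\infty = \max_{y \in \mathbb{S}^d} |\langle y | A | y\rangle|$ cited in step (o) is a maximum over a strictly larger set than the net. So all the substance is in the upper bound.

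For the upper bound, I would start by choosing a unit vector $y^\star \in \mathbb{S}^d$ attaining the operator norm, which exists by compactness of $\mathbb{S}^d$ and continuity of $y \mapsto \langle y | A | y\rangle$. The covering property then furnishes a net point $z_j \in \mathcal{N}_\theta$ with $\|y^\star - z_j\| \leq \theta$. The pivotal identity, obtained by adding and subtracting $\langle y^\star | A | z_j\rangle$, is
\begin{equation*}
\langle y^\star | A | y^\star\rangle - \langle z_j | A | z_j\rangle = \langle y^\star | A | y^\star - z_j\rangle + \langle y^\star - z_j | A | z_j\rangle.
\end{equation*}
Each term on the right is bounded in modulus by $\|A\|_\infty \cdot \|y^\star - z_j\| \leq \theta \|A\|_\infty$ via Cauchy--Schwarz together with the fact that $\|y^\star\|=\|z_j\|=1$ and the definition of the operator norm. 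Adding these two estimates and applying the reverse triangle inequality yields
\begin{equation*}
\|A\|_\infty - |\langle z_j | A | z_j\rangle| \leq 2\theta\, \|A\|_\infty,
\end{equation*}
and (assuming $\theta < 1/2$) rearranging followed by maximizing the right-hand side over the net gives the claimed bound.

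I do not anticipate any serious obstacle: this is a textbook covering-net argument. The single point requiring care is to avoid the naive approach of writing $y^\star = z_j + h$ and fully expanding the quadratic form $\langle z_j+h | A | z_j+h\rangle$; doing so produces an additional $\theta^2\|A\|_\infty$ term from the $\langle h | A | h\rangle$ piece and leads to the strictly weaker denominator $1 - 2\theta - \theta^2$. The identity above sidesteps this by keeping one factor of the unit-norm vector $y^\star$ in both cross terms, so that the Cauchy--Schwarz bound is linear rather than quadratic in $\theta$.
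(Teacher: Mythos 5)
Your proof is correct. The paper does not supply its own proof of this lemma---it simply cites it as folklore (Vershynin's lecture notes)---and your argument is precisely the standard one from that reference: pick a maximizer $y^\star$ of the quadratic form, approximate it by a net point $z_j$ with $\|y^\star - z_j\|\leq\theta$, and bound the two cross terms $\langle y^\star|A|y^\star - z_j\rangle$ and $\langle y^\star - z_j|A|z_j\rangle$ each by $\theta\|A\|_\infty$, which yields the $1-2\theta$ denominator after rearranging (your remark about avoiding the quadratic expansion and its extra $\theta^2$ term is also apt).
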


This result highlights that already a rather coarse net suffices to get reasonable approximations to the operator norm. Here, we choose $\theta = \frac{1}{4}$ which, while certainly not optimal, simplifies exposition. In particular,
\begin{align}
\left\| \hat{L}_n - \rho \right\|_\infty
\leq 2 \max_{j \in \left[N\right]} \left| \frac{1}{n} \sum_{j=1}^n \langle z_j| X_i -\mathbb{E} \left[ X_i \right] |z_j \rangle  \right|, \label{eq:uniform_aux1}
\end{align}
where the maximization is over a covering net of fineness $\theta = \frac{1}{4}$.

\subsection{Step II: concentration}

Note that the right hand side of Eq.~\eqref{eq:uniform_aux1} corresponds to a maximum over $N$ different random variables -- each of them labeled by a unit vector $z_j$ in the net. Let $z \in \mathbb{S}^d$ be such a vector. Then, the associated random variable itself corresponds to an empirical average of $n$ i.i.d. variables:
\begin{equation*}
s_z = \langle z| X - \mathbb{E} \left[ X \right] |z \rangle.
\end{equation*}
Clearly, $s_z$ obeys $\mathbb{E} \left[ s_z \right] =0$ and, more importantly, 
has sub-exponential moment growth. While this follows directly from the fact that $s_z$ is bounded, the following result highlights that this  tail-behavior is actually independent of the ambient dimension.

\begin{lemma} \label{lem:moments}
Fix $z \in \mathbb{S}^d$. Then for any integer $p \geq 2$, the random variable $s_z$ obeys
\begin{equation*}
\mathbb{E} \left[ |s_z |^p \right] \leq  27 \times 6^{p-2} p! 
\end{equation*}
\end{lemma}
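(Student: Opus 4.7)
The plan is to parametrize $s_z$ in terms of the single bounded variable $w := |\langle z|v\rangle|^2 \in [0,1]$ and to reduce everything to moment bounds on $w$. Setting $\alpha := \langle z|\rho|z\rangle \in [0,1]$, unbiasedness (as established in Sec.~\ref{sub:concentration_structured}) gives $\mathbb{E}[\langle z|X|z\rangle] = \alpha$, so that
\begin{equation*}
s_z = (d+1) w - (1+\alpha).
\end{equation*}
Since both terms are non-negative, $|s_z| \leq \max\bigl((d+1)w,\, 1+\alpha\bigr)$, and the elementary inequality $\max(A,B)^p \leq A^p + B^p$ for $A,B \geq 0$ reduces the task to bounding $\mathbb{E}[((d+1)w)^p]$ and adding $(1+\alpha)^p \leq 2^p$.

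The core of the argument is an exact moment formula for $w$, obtained from the frame operator identity \eqref{eq:frame_operator}. By Born's rule,
\begin{equation*}
\mathbb{E}[w^k] = d\,\mathrm{tr}\!\Bigl[\bigl((|z\rangle\!\langle z|)^{\otimes k} \otimes \rho\bigr)\, F_{(k+1)}\Bigr].
\end{equation*}
Expanding $P_{\mathrm{Sym}^{(k+1)}} = \tfrac{1}{(k+1)!}\sum_{\pi \in S_{k+1}} V_\pi$, each $V_\pi$ contributes one factor per cycle. Cycles lying entirely inside $\{1,\ldots,k\}$ contribute $1$ because $|z\rangle\!\langle z|$ is a rank-one projector, while the cycle through the last slot contributes $\mathrm{tr}(\rho) = 1$ when $k+1$ is a fixed point and $\mathrm{tr}(|z\rangle\!\langle z|\rho) = \alpha$ otherwise. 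Counting $k!$ fixed-point permutations versus $k\cdot k!$ others yields
\begin{equation*}
\mathbb{E}[w^k] = \frac{k!(1+k\alpha)}{(d+1)(d+2)\cdots(d+k)}.
\end{equation*}

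Combining these ingredients, the trivial bound $(d+1)^{k-1} \leq (d+2)(d+3)\cdots(d+k)$ gives $(d+1)^k \mathbb{E}[w^k] \leq k!(1+k\alpha) \leq (k+1)!$, and hence
\begin{equation*}
\mathbb{E}[|s_z|^p] \leq \mathbb{E}[((d+1)w)^p] + 2^p \leq (p+1)! + 2^p.
\end{equation*}
The final step is a purely numerical verification that $(p+1)! + 2^p \leq 27\cdot 6^{p-2} p!$ for every integer $p \geq 2$: dividing by $p!$ reduces it to $(p+1) + 2^p/p! \leq 27\cdot 6^{p-2}$, whose right-hand side grows exponentially while the left-hand side grows linearly. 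A direct check at $p=2$ ($5 \leq 27$) together with monotone separation thereafter closes the argument.

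The main obstacle is the cycle-structure bookkeeping in the moment formula: one must handle carefully the asymmetric role of the last tensor slot (occupied by $\rho$ rather than $|z\rangle\!\langle z|$) to extract the linear-in-$\alpha$ correction $(1+k\alpha)$. Once that formula is in hand, the remaining steps are either a direct consequence of the frame operator identity already stated in the paper or elementary inequalities.
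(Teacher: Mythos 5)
Your proposal is correct, and it takes a recognizably different route from the paper's, even though both hinge on the same key ingredient, the frame-operator identity \eqref{eq:frame_operator} for order $p+1$. The paper writes $s_z=(d+1)\langle v|B|v\rangle$ for a fixed matrix $B=|z\rangle\!\langle z|-\tfrac{1+\alpha}{d+1}\mathbb{I}$ with $\|B\|_1\leq 3$, passes to the absolute value $|B|$, and bounds $\mathrm{tr}\bigl(P_{\mathrm{Sym}^{(p+1)}}\,\rho\otimes|B|^{\otimes p}\bigr)\leq 3^p$ by H\"older, which gives $\mathbb{E}|s_z|^p\leq 3^p(p+1)!$ before the final numerical comparison. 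You instead exploit the specific structure $s_z=(d+1)w-(1+\alpha)$ with $w=|\langle z|v\rangle|^2\geq 0$, split via $|s_z|\leq\max\bigl((d+1)w,\,1+\alpha\bigr)$, and compute the moments of $w$ \emph{exactly} by expanding $P_{\mathrm{Sym}^{(k+1)}}$ into permutation operators and counting cycles; your formula $\mathbb{E}[w^k]=k!(1+k\alpha)/\bigl((d+1)\cdots(d+k)\bigr)$ checks out (the cycle through the $\rho$-slot contributes $\alpha$ unless $k+1$ is fixed, giving $k!\cdot 1+k\,k!\,\alpha$ over $(k+1)!$, and $d\binom{d+k}{k+1}^{-1}=(k+1)!/\bigl((d+1)\cdots(d+k)\bigr)$). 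This yields the intermediate bound $\mathbb{E}|s_z|^p\leq(p+1)!+2^p$, which is in fact sharper than the paper's $3^p(p+1)!$, and the final check $(p+1)+2^p/p!\leq 27\cdot 6^{p-2}$ is immediate since the left side is at most $p+3$. What the paper's route buys is genericity: the $|B|$/H\"older step needs only a trace-norm bound on $B$ and no combinatorial bookkeeping, so it would survive perturbations of the estimator; what your route buys is an exact moment formula and tighter constants, at the price of the careful cycle-structure argument for the asymmetric tensor slot, which you handle correctly.
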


We divert the proof of this statement to the end of this section and content ourselves with emphasizing that the closed form expression of the frame operator \eqref{eq:frame_operator} is essential for bounding all moments simultaneously.
More relevant to the task at hand is that such a moment behavior ensures that the tails of the distribution of $s_z$ follow an exponential decay: $\mathrm{Pr} \left[ |s_z| \geq t \right] \leq \mathrm{e}^{-ct}$, where $c$ is a constant independent of the dimension $d$. Strong classical concentration inequalities apply for sums of i.i.d. random variables that exhibit such sub-exponential behavior. We choose to apply a rather general version of the classical Bernstein inequality, see e.g.\ \cite[Theorem~7.30]{foucart_mathematical_2013}.

\begin{theorem} \label{thm:scalar_bernstein}
Let $s_1,\ldots,s_n \in \mathbb{R}$ i.i.d.\ copies of a mean-zero random variable $s$ that obeys $\mathbb{E} \left[ |s|^p \right] \leq p! R^{p-2} \sigma^2/2$ for all integers $p \geq 2$, where $R,\sigma^2>0$ are constants. Then, for all $t >0$,
\begin{equation*}
\mathrm{Pr} \left[ \left| \sum_{i=1}^n s_i \right| \geq t \right] \leq 
2 \exp \left( - \frac{t^2/2}{n\sigma^2 +Rt} \right).
\end{equation*} 
\end{theorem}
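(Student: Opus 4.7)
The plan is to run the classical Chernoff argument, where the subexponential moment hypothesis is used only to bound the moment generating function of a single summand.

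First I would fix $\lambda \in (0, 1/R)$ and bound $\mathbb{E} \left[ \mathrm{e}^{\lambda s} \right]$. Expanding the exponential, invoking $\mathbb{E}[s]=0$, and replacing $\mathbb{E}[s^p]$ by the upper bound $\mathbb{E}[|s|^p]$ (valid term by term because $\lambda > 0$), the moment hypothesis gives
\begin{equation*}
\mathbb{E} \left[ \mathrm{e}^{\lambda s} \right] \leq 1 + \frac{\sigma^2 \lambda^2}{2} \sum_{p \geq 0} \left( \lambda R \right)^p = 1 + \frac{\sigma^2 \lambda^2 / 2}{1 - \lambda R} \leq \exp \left( \frac{\sigma^2 \lambda^2 / 2}{1 - \lambda R} \right).
\end{equation*}
Independence then yields $\mathbb{E}\left[ \exp \left( \lambda \sum_i s_i \right) \right] \leq \exp \left( n \sigma^2 \lambda^2 / \left( 2 \left( 1 - \lambda R \right) \right) \right)$, and Markov's inequality converts this into a Chernoff bound valid for every admissible $\lambda$.

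Second I would optimize. The choice $\lambda_\star = t / \left( n \sigma^2 + R t \right)$ automatically lies in $(0, 1/R)$ and satisfies $1 - \lambda_\star R = n \sigma^2 / \left( n \sigma^2 + R t \right)$. A short algebraic manipulation shows that the Chernoff exponent $\lambda_\star t - n \sigma^2 \lambda_\star^2 / \left( 2 \left( 1 - \lambda_\star R \right) \right)$ collapses to $t^2 / \left( 2 \left( n \sigma^2 + R t \right) \right)$, producing the one-sided tail bound
\begin{equation*}
\mathrm{Pr} \left[ \sum_{i=1}^n s_i \geq t \right] \leq \exp \left( - \frac{t^2/2}{n \sigma^2 + R t} \right).
\end{equation*}
Repeating the argument with $-s$ in place of $s$ (which shares all absolute moments with $s$) and union-bounding over the two one-sided events delivers the claimed two-sided estimate with the prefactor $2$.

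The only real obstacle is verifying that $\lambda_\star$ both stays in the admissible range $\lambda R < 1$ and (sufficiently closely) minimizes the Chernoff exponent; everything else is mechanical. The hypothesis on the full sequence of absolute moments is exactly what is needed for the geometric series $\sum_p \left( \lambda R \right)^p$ to converge and for the MGF to take the clean subexponential form displayed above, which is the single place the subexponential assumption enters.
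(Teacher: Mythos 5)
Your argument is correct: the moment-generating-function bound $\mathbb{E}[\mathrm{e}^{\lambda s}]\leq \exp\bigl(\tfrac{\sigma^2\lambda^2/2}{1-\lambda R}\bigr)$ for $\lambda R<1$, independence, Markov's inequality, and the plug-in choice $\lambda_\star = t/(n\sigma^2+Rt)$ (which indeed satisfies $\lambda_\star R<1$ and makes the exponent collapse to $t^2/(2(n\sigma^2+Rt))$) together with the reflection $s\mapsto -s$ give exactly the stated two-sided bound. The paper does not prove this statement itself but cites it as a known Bernstein-type inequality (Theorem~7.30 of Foucart--Rauhut), whose textbook proof is precisely the Chernoff/MGF argument you give, so your proposal matches the intended route; note only that you do not need $\lambda_\star$ to be the exact minimizer of the Chernoff exponent---any admissible $\lambda$ yields a valid bound, and this choice already produces the claimed one.
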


Lemma~\ref{lem:moments} ensures that the random variable $s_z$ meets this requirement with $\sigma^2 = 54$ and $R = 6$. Hence, the following Corollary is an immediate consequence of Theorem~\ref{thm:scalar_bernstein}.

\begin{corollary} \label{cor:scalar_concentration}
Fix $z \in \mathbb{S}^d$. Then, for any $t \in [0,1]$
\begin{equation*}
\mathrm{Pr} \left[ \left| \frac{1}{n} \sum_{i=1}^n \langle z| X_i - \mathbb{E} \left[ X_i \right] |z \rangle \right| \geq t \right]
\leq 2 \mathrm{e}^{-\frac{nt^2}{120}}.
\end{equation*}
\end{corollary}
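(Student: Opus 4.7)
The plan is to apply the scalar Bernstein inequality (Theorem~\ref{thm:scalar_bernstein}) to the i.i.d.\ sequence $s_z^{(i)} := \langle z| X_i - \mathbb{E}[X_i] |z\rangle$ after verifying that the moment bound from Lemma~\ref{lem:moments} fits the required hypothesis. First I would note that each $s_z^{(i)}$ is mean-zero by linearity of expectation and that Lemma~\ref{lem:moments} supplies $\mathbb{E}\bigl[|s_z^{(i)}|^p\bigr] \leq 27 \cdot 6^{p-2}\, p!$ for every integer $p \geq 2$. To cast this in the shape $p!\, R^{p-2}\sigma^2/2$ demanded by Theorem~\ref{thm:scalar_bernstein}, I would set $R = 6$ and $\sigma^2 = 54$; then $p!\, R^{p-2}\sigma^2/2 = 27 \cdot 6^{p-2}\, p!$, matching the moment bound term by term.

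With the hypothesis verified, I would apply Theorem~\ref{thm:scalar_bernstein} to the sum $S_n := \sum_{i=1}^n s_z^{(i)}$ with deviation threshold $u = nt$, which after dividing by $n$ reproduces the $(1/n)$-normalized form in the statement:
\begin{equation*}
\mathrm{Pr}\!\left[\Bigl|\tfrac{1}{n}\sum_{i=1}^n s_z^{(i)}\Bigr| \geq t\right] \leq 2\exp\!\left(-\frac{(nt)^2/2}{54\, n + 6\, n t}\right) = 2\exp\!\left(-\frac{n t^2}{108 + 12 t}\right).
\end{equation*}
Restricting to $t \in [0,1]$ gives $108 + 12 t \leq 120$, which immediately yields the advertised bound $2\exp(-n t^2/120)$.

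There is no genuine obstacle: the heavy lifting has already been absorbed into Lemma~\ref{lem:moments}, where the explicit frame-operator identity \eqref{eq:frame_operator} provides dimension-free control of \emph{all} moments of $s_z$ uniformly in $z \in \mathbb{S}^d$. Everything that remains is a constant substitution and one inequality in $t$. The only point requiring care is to match the normalization convention of Theorem~\ref{thm:scalar_bernstein} (which controls the raw sum $S_n$ rather than the empirical mean) so that the extra factor of $n$ is tracked correctly on both sides when translating between the two formulations.
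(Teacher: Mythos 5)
Your proposal is correct and follows exactly the paper's route: the paper likewise feeds the moment bound of Lemma~\ref{lem:moments} into Theorem~\ref{thm:scalar_bernstein} with $\sigma^2 = 54$ and $R = 6$, and your explicit computation $108 + 12t \leq 120$ for $t \in [0,1]$ is precisely where the constant $120$ comes from.
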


\subsection{Step III: union bound}

Recall that Eq.~\eqref{eq:uniform_aux1} upper-bounds $\| \hat{L}_n-\rho \|_\infty$ by a maximum over finitely many random variables, each of which is controlled by the strong exponential tail inequality from Corollary~\ref{cor:scalar_concentration}. 
To exploit this, we fix $\tau \in \left[0,1 \right]$ and apply a union bound (also known as Boole's inequality) over all these different random variables to obtain
\begin{align*}
& \mathrm{Pr} \left[ \left\| \hat{L}_n - \rho \right\|_\infty \geq \tau \right] \\
\leq & \mathrm{Pr} \left[ \max_{j \in \left[N \right]} \left| \frac{1}{n} \sum_{j=1}^n \langle z_j |X_i - \mathbb{E} \left[ X_i \right] |z_j \rangle \right| \geq \frac{\tau}{2} \right] \\
\leq &N \max_{j \in \left[N \right]} \mathrm{Pr} \left[  \left|\frac{1}{n} \sum_{j=1}^n \langle z_j |X_i - \mathbb{E} \left[ X_i \right] |z_j \rangle \right| \geq \frac{\tau}{2} \right] 
\leq  2 N \mathrm{e}^{-\frac{n \tau^2}{480}},
\end{align*}
where the last line is due to Corollary~\ref{cor:scalar_concentration}
Here, $N = |\mathcal{N}_{\frac{1}{4}}|$ denotes the cardinality of a covering net for the complex unit sphere $\mathbb{S}^d$ with fineness $\theta = \frac{1}{4}$. 
The complex unit sphere admits an isometric embedding into the real-valued unit sphere in $2d$-dimensions: Map real- and imaginary parts of each complex vector component onto two distinct real parameters. This map preserves Euclidean lengths and, by extension, also the geometry of the unit sphere. Volumetric upper bounds on the cardinality of covering nets for the $2d$-dimensional real-valued unit sphere are widely known, see e.g.\ \cite[Proposition~C.3]{foucart_mathematical_2013} and \cite[Lemma~5.2]{vershynin_introduction_2012}: 
 $| \mathcal{N}_\theta | \leq \left( 1+ \frac{2}{\theta} \right)^{2d}$.
Since a fineness of $\theta = \frac{1}{4}$ suffices for our purpose, we can conclude $N \leq 3^{2d}$ and consequently,
\begin{equation*}
\mathrm{Pr} \left[ \left\| \hat{L}_n - \rho \right\|_\infty \geq \tau \right]
\leq 2 \times 3^{2d} \mathrm{e}^{- \frac{n \tau^2}{480}}
= 2 \mathrm{e}^{ 2 \log (3)d - \frac{n \tau^2}{480} }
\end{equation*}
This concludes the proof of Theorem~\ref{thm:uniform}.

\subsection{Proof of Lemma~\ref{lem:moments}}

Recall that, by assumption, the random matrix $X$ assumes the value $X= (d+1)|v \rangle \! \langle v| - \mathbb{I}$ with probability $\langle v| \rho |v \rangle \mathrm{d}v$, where $v$ may range over the entire complex unit sphere $\mathbb{S}^d$. Moreover, $\mathbb{E} \left[ X \right] = \rho$. For fixed $z \in \mathbb{S}^d$, we may therefore write
\begin{align*}
s_z =& \langle z| X - \mathbb{E} \left[ X \right] |z \rangle 
= 
 (d+1) \langle v| B |v \rangle,
\end{align*}
where $B = |z \rangle \! \langle z| - \frac{1+\langle z| \rho |z \rangle}{d+1} \mathbb{I} \in \mathbb{H}_d$ has bounded trace norm
\begin{align}
\| B \|_1 \leq 1+ \left(1+\langle z| \rho |z \rangle \right) \leq 3.
\label{eq:uniform_trace-norm_bound}
\end{align}
Next, recall a basic identity from matrix analysis that states 
\begin{equation*}
| \langle v| B| v \rangle| = \left| \mathrm{tr} \left( |v \rangle \! \langle v| B \right) \right|
\leq \mathrm{tr} \left( |v \rangle \! \langle v| \; |B| \right),
\end{equation*}
where $|B|=\sqrt{B^2}$ denotes the absolute value of the matrix $B$. 
Also, the Schatten-$p$ norms of matrices and their absolute values coincides, in particular $\| B \|_1 = \mathrm{tr}(|B|) = \| \; | B |\; \|_1$.
We can use this trick to absorb the absolute value in the moment computation. More precisely, fix an integer $p \geq 2$ and note that $\mathbb{E} \left[ |s_z |^p \right]$ obeys
\begin{align*}
 \mathbb{E} \left[ \left|(d+1) \langle v| B |v \rangle \right|^p \right] 
\leq  (d+1)^p\mathbb{E} \left[ \mathrm{tr} \left( |v \rangle \! \langle v|\; |B| \right)^p \right].
\end{align*}
We can now include the distribution of the random matrices $X$, and -- by extension -- $|v \rangle \! \langle v|$ -- to compute
\begin{align*}
\mathbb{E} \left[ |s_z|^p \right] \leq & (d+1)^p\mathbb{E} \left[ \mathrm{tr} \left( |v \rangle \! \langle v| \; |B| \right)^p \right] \\
=& d (d+1)^p \int_{\mathbb{S}^d} \langle v| \rho |v \rangle \mathrm{tr} \left( |v \rangle \! \langle v| \; |B| \right)^p \mathrm{d}v \\
=& d(d+1)^p \mathrm{tr} \left( \int_{\mathbb{S}^d} \left(|v \rangle \! \langle v|) \right)^{\otimes (p+1)} \; \rho \otimes |B|^{\otimes p} \right) \\
=& d(d+1)^p \binom{d+p}{p+1}^{-1} \mathrm{tr} \left( P_{\mathrm{Sym}^{(p+1)}} \rho \otimes |B|^{\otimes p} \right),
\end{align*}
where the last equation is due to Eq.~\eqref{eq:frame_operator}.
Next, we note that Hoelder's inequality implies
\begin{align*}
\mathrm{tr} \left( P_{\mathrm{Sym}^{(p+1)}} \rho \otimes |B|^{\otimes p} \right)
\leq \| P_{\mathrm{Sym}^{p+1}} \|_\infty \| \rho \|_1 \| \; |B| \; \|_1^p \leq 3^p,
\end{align*}
because $P_{\mathrm{Sym}^{(p+1)}}$ is an orthogonal projector, $\rho$ is a quantum state
and $B$ is bounded in trace norm \eqref{eq:uniform_trace-norm_bound}. For the remaining pre-factor we use the crude bound
\begin{equation*}
d(d+1)^p \binom{d+p}{p+1}^{-1} \leq (p+1)! \leq 3 \times 2^{p-2} p!
\end{equation*}
to establish the statement.

\section{Additional numerical experiments} \label{sec:numerics}

Maximal sets of mutually unbiased bases (MUBs) form a structured POVM (2-design) that lends itself to numerical investigation. Efficient algebraic constructions of MUBs exist in prime power dimensions $d=p^k$ \cite{wootters_optimal_1989, bandyopadhyay_new_2002,klappenecker_constructions_2004}. 
To further underline the implicit advantage of low-rank we fix a prime dimension $d$ and choose a pure state uniformly from the Haar measure on the complex unit sphere in $d$ dimensions.
We compute the outcome probabilities  for each of the $d+1$ different MUB measurements. We then sample outcomes from each distribution a total of $\frac{n}{d+1}$ times
and compute the estimator $\hat{\rho}_{n}$ associated with the total frequency statistics. 
Figure~\ref{fig:mubs} shows the relation between reconstruction error (in trace distance) and the number of samples per basis on a $\log-\log$-scale for different prime dimensions between $d=100$ and $d=200$.
This figure suggests that the rate of convergence only depends linearly on the ambient dimension -- the additional $\log (d)$-factor in the main result for structured POVMs is barely visible.

\begin{figure} 
\includegraphics[width=0.48\textwidth]{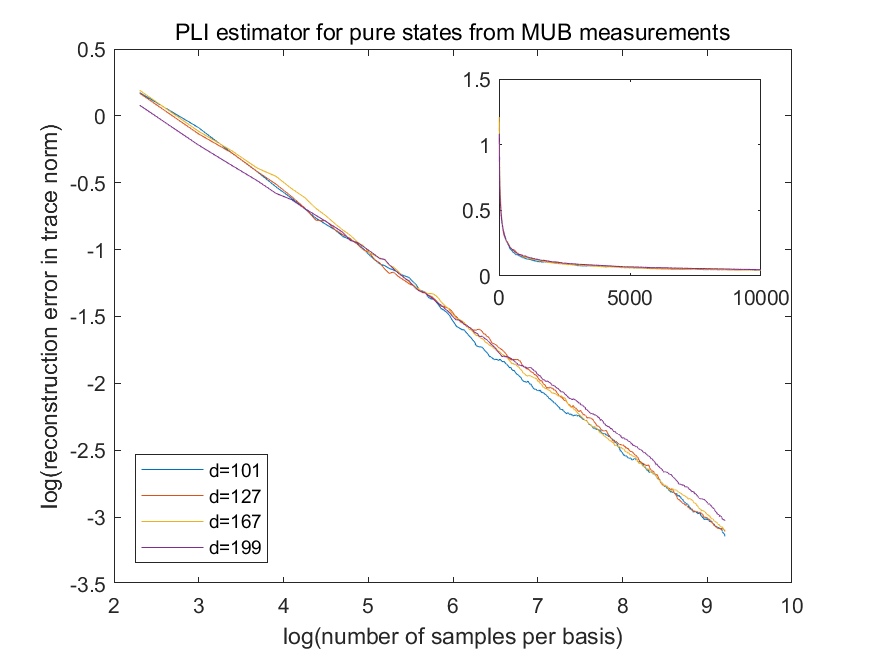}
\caption{$\log$ trace distance error vs. $\log $ sample size for different prime dimensions $d$. \emph{Inset:} ordinary plot of the same data.} \label{fig:mubs}
\end{figure}

\end{document}